\documentclass[journal]{IEEEtran}
\ifCLASSINFOpdf
\else
\fi
\usepackage[utf8]{inputenc} 
\usepackage[T1]{fontenc}
\usepackage{microtype}
\usepackage{textcomp}
\usepackage{xcolor}

\usepackage{amsmath, amssymb, amsfonts, amsthm, amstext, amsbsy}
\usepackage{mathtools}
\usepackage{mathrsfs}
\usepackage{bm}
\usepackage{dsfont}
\usepackage{stmaryrd}
\usepackage{xfrac}
\usepackage{nicefrac}
\usepackage{relsize}
\allowdisplaybreaks[1]

\usepackage{algorithm}
\usepackage{algpseudocode}

\usepackage{graphicx}
\usepackage{subcaption}
\usepackage{booktabs}
\usepackage{multirow}
\usepackage{array}
\usepackage{tabularx}
\usepackage{longtable}
\usepackage{makecell}
\usepackage{colortbl}
\usepackage{float}

\usepackage{graphicx}

\usepackage{setspace}
\usepackage{cuted}
\usepackage{multicol}
\usepackage{enumitem}
\usepackage{footmisc}
\usepackage{chngcntr}
\usepackage{tocloft}

\usepackage{tikz}
\usetikzlibrary{arrows.meta, positioning, calc}

\usepackage{cite}
\usepackage{doi}
\usepackage{bibunits}
\defaultbibliographystyle{IEEEtran}
\defaultbibliography{references}

\usepackage{hyperref}
\hypersetup{
    colorlinks=true,
    linkcolor={red!50!black},
    citecolor={blue!50!black},
    urlcolor={black}
}
\usepackage{orcidlink}

\definecolor{headercolor}{gray}{0.82}
\definecolor{zebracolor}{gray}{0.95}
\definecolor{lightgray}{gray}{0.93}

\def\BibTeX{{\rm B\kern-.05em{\sc i\kern-.025em b}\kern-.08em
    T\kern-.1667em\lower.7ex\hbox{E}\kern-.125emX}}

\def\D{\mathrm{D}} 

\newcommand{\orcid}[1]{%
    \href{https://orcid.org/#1}{\includegraphics[width=10pt]{orcid}}%
}

\newtheorem{theorem}{Theorem}

\newtheorem{definition}[theorem]{Definition}
\newtheorem{remark}[theorem]{Remark}
\newtheorem{proposition}[theorem]{Proposition}

\newtheorem{corollary}[theorem]{Corollary}

\newtheorem{assumption}{Assumption}

\def\markov{\hbox{$\--$}\kern-1.5pt\hbox{$\circ$}\kern-1.5pt\hbox{$\--$}}

\makeatletter
\newcommand*{\centernot}{\mathpalette\@centernot}
\def\@centernot#1#2{%
  \mathrel{%
    \rlap{%
      \settowidth\dimen@{$\m@th#1{#2}$}%
      \kern.5\dimen@
      \settowidth\dimen@{$\m@th#1=$}%
      \kern-.5\dimen@
      $\m@th#1\not$%
    }%
    {#2}%
  }%
}
\makeatother

\renewcommand\thesection{\arabic{section}}
\renewcommand\thesubsection{\arabic{section}.\arabic{subsection}}

\def\thesectiondis{\thesection.}
\def\thesubsectiondis{\thesectiondis\arabic{subsection}.}

\makeatletter
\def\@seccntformatinl#1{\csname the#1dis\endcsname\hskip 1em\relax}
\makeatother

\makeatletter
\newcommand*{\rom}[1]{\expandafter\@slowromancap\romannumeral #1@}
\makeatother

\usepackage{cleveref}

\makeatletter

\let\lpgrand@original@section\section
\let\lpgrand@original@subsection\subsection

\NewDocumentCommand{\lpgrand@section}{s o m}{%
  \IfBooleanTF{#1}{%
    \lpgrand@original@section*{#3}%
  }{%
    \IfNoValueTF{#2}{%
      \lpgrand@original@section{#3}%
      \addcontentsline{atoc}{section}{%
        \protect\numberline{\thesection}#3}%
    }{%
      \lpgrand@original@section[#2]{#3}%
      \addcontentsline{atoc}{section}{%
        \protect\numberline{\thesection}#2}%
    }%
  }%
}

\NewDocumentCommand{\lpgrand@subsection}{s o m}{%
  \IfBooleanTF{#1}{%
    \lpgrand@original@subsection*{#3}%
  }{%
    \IfNoValueTF{#2}{%
      \lpgrand@original@subsection{#3}%
      \addcontentsline{atoc}{subsection}{%
        \protect\numberline{\thesubsection}#3}%
    }{%
      \lpgrand@original@subsection[#2]{#3}%
      \addcontentsline{atoc}{subsection}{%
        \protect\numberline{\thesubsection}#2}%
    }%
  }%
}

\newcommand{\AppendixOnlyTOC}{%
  \lpgrand@original@section*{Appendix Contents}%
  \begingroup
    \footnotesize
    \setcounter{tocdepth}{2}%
    \setlength{\parskip}{0pt}%
    \@starttoc{atoc}%
  \endgroup
  \let\section\lpgrand@section
  \let\subsection\lpgrand@subsection
}

\makeatother

\begin{document}

\begin{bibunit}

\title{On the Capacity of Distinguishable Synthetic Identity Generation under Face Verification}

\author{Behrooz~Razeghi\textsuperscript{\orcidlink{0000-0001-9568-4166}},~\IEEEmembership{Senior~Member,~IEEE}
\thanks{B. Razeghi is with the School of Engineering and Applied Sciences, Harvard University, Cambridge,
MA 02134 USA (e-mail: behroozrazeghi@seas.harvard.edu).}
\thanks{\urlstyle{sf}The official source code is available at: \url{https://github.com/BehroozRazeghi/synthetic-identity-capacity}.}%
%
}

%



\maketitle

\begin{abstract}
Synthetic face generators can produce many nominal identities, but nominal count does not determine how many are jointly distinguishable under a specified verification rule. We define finite-dimensional capacity as the
supremum of codebook sizes over distinct latent identity codes whose induced identity-conditional embedding distributions satisfy per-identity genuine acceptance and pairwise impostor non-match constraints. For deterministic view-invariant pipelines, fixed-code capacity equals the spherical-code cardinality over the realizable embedding set and reduces to the classical spherical-code cardinality when every sphere direction is realizable. For stochastic identity-conditional embedding distributions concentrated with probability at least $1-\eta$ in spherical caps of angular radius $\rho$, we derive a sufficient center-separation condition, spherical-code capacity lower bounds under full angular expressivity, and positive asymptotic lower-bound exponents for dimension-indexed pipeline families. We also derive prior-constrained random-code lower bounds from pairwise center-separation failure probabilities. When each identity-conditional embedding distribution has support equal to a spherical cap of angular radius $\rho$, we derive necessary zero-error geometric conditions and, for $2\rho<\arccos(\tau)$ under full $\rho$-cap angular expressivity, show that the restricted zero-error capacity equals the classical spherical-code cardinality at minimum angle $\arccos(\tau)+2\rho$. For finite repeated-view samples, a maximum clique in the resulting compatibility graph identifies the largest sampled subset satisfying all empirical genuine and pairwise impostor constraints. We evaluate this sample-restricted quantity on a deterministically selected DigiFace-1M subset under three fixed recognizers with identity-disjoint in-domain threshold calibration.
\end{abstract}

\begin{IEEEkeywords}
Face verification, synthetic identity generation, identity capacity, hyperspherical embeddings, spherical codes.
\end{IEEEkeywords}

%
\IEEEpeerreviewmaketitle

\section{Introduction}
\label{sec:introduction}

\IEEEPARstart{T}{he} number of nominal identities produced by a synthetic face generator does not determine how many are jointly distinguishable under a specified verification rule. Distinguishability requires simultaneous control of same-identity match probabilities and distinct-identity non-match probabilities. We formalize this requirement as a finite-dimensional identity capacity for a fixed generator--recognizer pipeline subject to per-identity genuine acceptance and pairwise impostor non-match probability constraints.

Representation-learning work has studied view agreement, representation spread, and information-theoretic criteria \cite{galvez2023role, oord2018representation, bachman2019learning, wang2020understanding, wu2020mutual, shwartz2023information, yerxa2023learning, isik2023information, schaeffer2024towards, razeghi2026deep}. Aggregate measures of view agreement or representation spread, information-theoretic criteria, average similarity, and population-averaged verification rates do not, by themselves, imply the required lower bounds on genuine acceptance probability for every identity and impostor non-match probability for every distinct identity pair. Recent synthetic-face work studies image realism, identity consistency, diversity, and synthetic-data utility across GAN, diffusion, and hybrid generation pipelines \cite{qiu2021synface, colbois2021use, kim2023dcface, melzi2023gandiffface, boutros2023synthetic, sun2024cemiface, shahreza2025hyperface, rahimi2024synthetic, caldeira2025negfacediff, kim2025vigface}. Related capacity and identity-provisioning formulations use manifold packing, hyperspherical-cap geometry, or explicit geometric separation~\cite{gong2017capacity,boddeti2023biometric,ji2026bip}. The capacity definition instead requires the identity-conditional embedding distributions induced by a codebook of pairwise-distinct latent identity codes to satisfy per-identity genuine acceptance and pairwise impostor non-match probability lower bounds under a fixed generator--recognizer pipeline and verification rule.

Spherical codes impose a minimum pairwise angular separation among deterministic embedding directions, whereas the capacity definition imposes lower bounds on same-identity match and distinct-identity non-match probabilities. For deterministic view-invariant pipelines, each identity-conditional embedding distribution is a point mass, and the capacity over deterministically selected latent identity codes equals the spherical-code cardinality over the realizable embedding directions. If, for every sphere direction, some latent identity code induces that embedding almost surely, the capacity equals the classical spherical-code cardinality \cite{rankin1955closest,conway2013sphere}. For the centered class, each identity-conditional distribution assigns at least a prescribed probability to a spherical cap of prescribed angular radius about an identity center. If any two points in the same cap satisfy the match rule and any two points from distinct identity caps satisfy the non-match rule, then, by independence, each required genuine acceptance probability and each required impostor non-match probability is at least the square of the prescribed lower bound on the probability of falling in the cap. If every finite set of pairwise-distinct prescribed sphere directions can be realized as centers of such distributions by pairwise-distinct latent codes, the cardinality of a classical spherical code whose minimum angle guarantees the cross-cap non-match condition gives a lower bound on the capacity restricted to the
centered class. If each identity-conditional distribution instead has support equal to its spherical cap, zero-error admissibility requires the angular diameter of the cap not to exceed the angular threshold induced by the verification threshold and requires pairwise center separation of at least the sum of that threshold and the cap diameter. When the cap diameter is strictly smaller than the angular threshold and every finite set of pairwise-distinct prescribed cap centers can be realized by pairwise-distinct latent codes whose induced distributions have those caps as their supports, the restricted zero-error capacity equals the classical spherical-code cardinality with minimum angle equal to the required pairwise center separation.

For finite repeated-view samples, identities satisfying the empirical genuine acceptance requirement form the vertices of a compatibility graph, and two vertices are joined by an edge when the corresponding identity pair satisfies the empirical impostor non-match requirement. Any maximum clique corresponds to a sampled identity subset of maximum cardinality satisfying every empirical genuine acceptance and pairwise impostor non-match requirement. Without additional assumptions on sampling and latent-space coverage, this finite-sample cardinality is neither an estimator nor a confidence bound for the supremum of sizes of latent identity codebooks whose induced distributions satisfy the defining probability constraints. The image-level study evaluates this finite-sample quantity under fixed recognizers and a specified protocol for verification-threshold calibration.


\paragraph*{Contributions}

Our contributions are summarized below.

\begin{enumerate}[leftmargin=*]
\item We define fixed-code operational identity capacity for a fixed generator--recognizer pipeline through per-identity genuine acceptance and pairwise impostor non-match probability constraints. We also define a prior-constrained random-code identity capacity requiring pairwise distinctness and the same constraints to hold jointly with probability at least $1-\delta$ over independently sampled latent identity codes.
\item For deterministic view-invariant pipelines, we characterize the fixed-code capacity for every $\varepsilon_{\rm in},\varepsilon_{\rm out}\in[0,1)$ as the spherical-code cardinality over the realizable embedding set, with equality to the classical spherical-code cardinality when every sphere direction is deterministically realizable. For $(\rho,\eta)$-centered identity-conditional embedding distributions, we establish sufficient cap-radius and center-separation conditions and, under full $(\rho,\eta)$-angular expressivity, a spherical-code lower bound on the $(\rho,\eta)$-restricted capacity. For dimension-indexed pipeline families that are fully $(\rho,\eta)$-angularly expressive for all sufficiently large dimensions, $2\rho<\arccos(\tau)$ and $\arccos(\tau)+2\rho<\pi/2$ yield an asymptotic rate lower bound.
\item Under the centered random-code model and $2\rho<\arccos(\tau)$, we derive lower bounds on prior-constrained random-code capacity from the pairwise center-separation failure probability. When each identity-conditional embedding distribution has support equal to a spherical cap of angular radius $\rho$, we derive necessary zero-error geometric conditions and show that, at $2\rho=\arccos(\tau)$, support alone does not determine zero-error genuine acceptance. When $2\rho<\arccos(\tau)$ and full $\rho$-cap angular expressivity holds, the full-cap restricted zero-error capacity equals the classical spherical-code cardinality at minimum angle $\arccos(\tau)+2\rho$.
\item For finite repeated-view samples, we construct a compatibility graph in which a maximum clique corresponds to a sampled identity subset of maximum cardinality satisfying all empirical genuine acceptance and pairwise impostor non-match requirements. We evaluate this finite-sample cardinality on a deterministically selected DigiFace-1M subset under three fixed recognizers with separate identity-disjoint in-domain threshold calibration for each recognizer.
\end{enumerate}

\paragraph*{Related Work}

Capacity has previously been studied for face representations and generative face models. Gong \textit{et al.}~\cite{gong2017capacity} formulate face-representation capacity as a packing problem on a low-dimensional manifold embedded in a deep representation space while accounting for epistemic and aleatoric representational uncertainty. Boddeti \textit{et al.}~\cite{boddeti2023biometric} estimate an upper bound on the biometric capacity of generative face models in a fixed hyperspherical face-recognition space by approximating the population and identity-specific manifolds with hyperspherical caps and taking the ratio of their surface areas at a specified matcher operating point. For the unconditional generators in their study, the identity-specific angular span cannot be estimated directly from repeated samples of a fixed generated identity; for the class-conditional DCFace model, repeated identity-conditioned samples permit direct estimation. More recent methods explicitly construct virtual identities in face-recognition embedding spaces. Kim \textit{et al.}~\cite{kim2025vigface} allocate virtual prototypes in a face-recognition embedding space and condition a diffusion model on these prototypes to synthesize identity-consistent faces. Ji \textit{et al.}~\cite{ji2026bip} formulate biometric identity provisioning as a constrained packing problem in which virtual identities must avoid an enrolled real-person gallery, remain mutually separated, and be realizable as face images.

Our formulation defines capacity directly through verification-event probabilities under a fixed verification rule. For a fixed generator--recognizer pipeline, we consider pairwise-distinct latent identity codes whose induced identity-conditional embedding distributions satisfy prescribed per-identity genuine acceptance and pairwise impostor non-match probability constraints. The fixed-code capacity is the supremum of codebook sizes satisfying these constraints. This differs from estimating capacity through a population-to-identity cap-area ratio or from allocating identity representations according to geometric separation constraints. The operational definition itself does not require a hyperspherical-cap population model, a representational-noise model, or angular expressivity. Angular expressivity is assumed only to guarantee that prescribed spherical-code centers can be realized by pairwise-distinct latent codes with the required identity-conditional embedding distributions.

\paragraph*{Organization}
Section~\ref{sec:problem_formulation} defines the fixed-code and prior-constrained random-code capacities and the finite-sample operational formulation. Section~\ref{sec:geometric_identity_capacity} analyzes deterministic view-invariant and centered stochastic fixed-code regimes, finite-dimensional spherical-code bounds, and dimension-indexed asymptotic achievability bounds. Section~\ref{sec:random_code_capacity} analyzes prior-constrained random coding under the centered model. Section~\ref{sec:image_level_evaluation} presents the finite-sample DigiFace-1M evaluation and the effective-dimension sensitivity analysis. Complete proofs, auxiliary theoretical results, the full-cap-support converse, spherical-code numerical evaluations, and the image-level evaluation protocol and reproducibility records are provided in the Appendices.

\section{System Model and Problem Formulation}
\label{sec:problem_formulation}

\begin{figure*}
    \centering
    \includegraphics[width=0.79\linewidth]{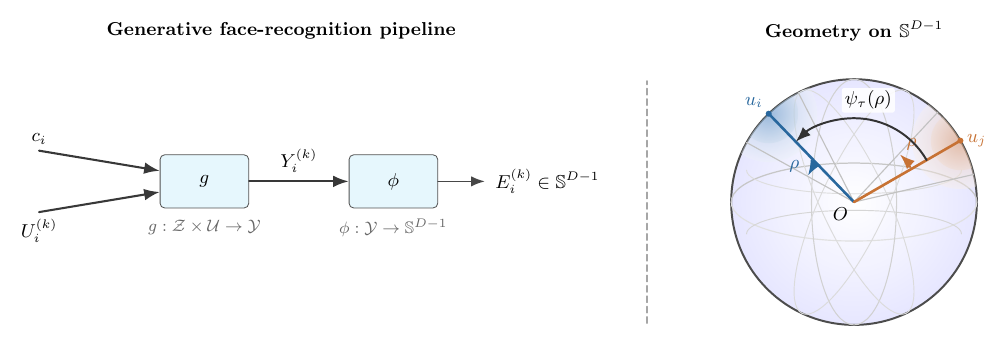}
    \caption{Generative face-recognition pipeline and centered identity-embedding geometry.}
    \label{fig:identity_capacity_blockdiagram}
\end{figure*}

\subsection{Generative Face-Recognition Pipeline}
\label{subsec:gen_fr_pipeline}

We model a generative face-recognition pipeline as a latent-code-based generator followed by a deterministic normalized embedding map. To represent repeated observations of a generator-defined identity, we allow nuisance-varying realizations associated with pose, illumination, expression, rendering stochasticity, and other within-identity variation. The resulting model is specified through identity-conditional embedding distributions.

Let $\mathcal Z$ denote the latent space, $\mathcal U$ the nuisance space, and $\mathcal Y$ the face-image space. Let $D\ge2$, and for $n\in\mathbb N$, write $[n]\triangleq\{1,\ldots,n\}$. We equip these spaces with sigma-algebras, assume that $\mathcal Z$ is a standard Borel space, and assume that $(c,u)\mapsto\phi(g(c,u))$ is measurable. The standard Borel assumption ensures that the pairwise-distinct-code events used in the random-code formulation are measurable. For each identity $i\in[M]$, let $c_i\in\mathcal Z$ denote its latent identity code. Let $\{U_i^{(k)}\}_{i\in[M],\,k\in\mathbb N} \stackrel{\mathrm{i.i.d.}}{\sim}P_U$ represent within-identity nuisance variation. For each identity $i$ and view index $k$, define
\begin{equation}
Y_i^{(k)} \triangleq g(c_i,U_i^{(k)}),
\qquad E_i^{(k)} \triangleq \phi(Y_i^{(k)})\in\mathbb S^{D-1},
\label{eq:multiview_pipeline_main}
\end{equation}
where $g:\mathcal Z\times\mathcal U\to\mathcal Y$ is the generator, $\phi:\mathcal Y\to\mathbb S^{D-1}$ is the deterministic embedding map, and $\mathbb S^{D-1} \triangleq \{x\in\mathbb R^D:\|x\|_2=1\}$. For a fixed codebook $(c_1,\ldots,c_M)$, each sequence $(E_i^{(k)})_{k\ge1}$ is i.i.d., and the sequences corresponding to distinct identities are independent.

For each $c\in\mathcal Z$, define the induced embedding distribution $P_c$ on $\mathbb S^{D-1}$ by
\begin{equation}
P_c(A) \triangleq \Pr \left[\phi(g(c,U))\in A\right], \;
A\in\mathcal B(\mathbb S^{D-1}), \;
U\sim P_U,
\label{eq:Pc_def}
\end{equation}
where $\mathcal B(\mathbb S^{D-1})$ denotes the Borel sigma-algebra on $\mathbb S^{D-1}$. The measurability assumption above implies that $P_c$ is a probability measure for each $c$ and that $c\mapsto P_c(A)$ is measurable for every $A\in\mathcal B(\mathbb S^{D-1})$. Under \eqref{eq:multiview_pipeline_main}, $E_i^{(k)}\sim P_{c_i}$, $i\in[M],\ k\in\mathbb N$.
For a fixed codebook $(c_1,\ldots,c_M)$, we write $P_i\triangleq P_{c_i}$, $i\in[M]$, and refer to $\{P_i\}_{i=1}^M$ as the induced family of identity-conditional embedding distributions. Similarity is measured by the cosine score
\begin{equation}
S(e,e') \triangleq \langle e,e'\rangle,
\qquad e,e'\in\mathbb S^{D-1},
\label{eq:cosine_similarity_main}
\end{equation}
with verification threshold $\tau\in[0,1)$. We also use the angular distance
\begin{equation}
\angle(e,e') \triangleq \arccos\langle e,e'\rangle,
\qquad e,e'\in\mathbb S^{D-1}.
\label{eq:angle_definition_main}
\end{equation}
We adopt the convention that the verifier declares a match if $\langle e,e'\rangle>\tau$ and a non-match if $\langle e,e'\rangle\le\tau$. Equivalently, a match occurs when $\angle(e,e')<\arccos(\tau)$, and a non-match occurs when $\angle(e,e')\ge\arccos(\tau)$. The explicit tie convention is required for statements involving probability mass at the verification threshold.

We first consider a fixed-code formulation, in which the latent identity codes are selected deterministically. This corresponds to a deterministic codebook in information theory. We then consider a prior-constrained random-code formulation in which the identity codes are sampled independently from a prescribed probability measure on $\mathcal Z$. The random-code formulation applies when the sampled variable indexes a repeatable identity-conditional generation mechanism.

\subsection{Fixed-Code Operational Identity Capacity}
\label{subsec:fixed-code-operational_capacity}

Verification is specified through two tail constraints. Genuine pairs, formed from two independent views of the same identity, are required to have score strictly above the verification threshold with a prescribed minimum probability. Impostor pairs, formed from independent views of distinct identities, are required to have score at most the threshold with a prescribed minimum probability.

\begin{definition}[Fixed-Code Operational Identity Capacity]
\label{def:operational_capacity_main}
Fix $(g,\phi)$, embedding dimension $D\ge2$, threshold $\tau\in[0,1)$, and tolerances $\varepsilon_{\rm in},\varepsilon_{\rm out}\in[0,1)$. The nuisance distribution $P_U$ is fixed as part of the pipeline specification and is suppressed from the capacity notation. A family of identity-conditional embedding distributions $\{P_i\}_{i=1}^M$ is called \textit{$(\tau,\varepsilon_{\rm in},\varepsilon_{\rm out})$-admissible} if
\begin{subequations}
\begin{align}
\Pr \left[\left\langle E_i^{(1)},E_i^{(2)} \right\rangle > \tau \right] &\ge 1-\varepsilon_{\rm in},
&&\forall i\in[M],
\label{eq:admissible_intra_main}
\\
\Pr \left[\left\langle E_i^{(1)},E_j^{(1)} \right\rangle \le \tau \right] &\ge 1-\varepsilon_{\rm out},
&&\forall i\ne j,
\label{eq:admissible_inter_main}
\end{align}
\end{subequations}
where $E_i^{(1)},E_i^{(2)}\stackrel{\mathrm{i.i.d.}}{\sim}P_i$, and, for $i\ne j$, $E_j^{(1)}\sim P_j$ is independent of $E_i^{(1)}$.

The fixed-code operational identity capacity of $(g,\phi)$ is
\begin{multline}
\!\!\!
C_D(\tau,\varepsilon_{\rm in},\varepsilon_{\rm out};g,\phi) \triangleq
\sup\Bigl( \{0\} \cup \Bigl\{ M\in\mathbb N:\\ \exists\,c_1,\ldots,c_M\in\mathcal Z,  c_i\ne c_j \,
\forall i,j\in[M],\ i\ne j, \\
\quad\text{s.t.}\quad \{P_{c_i}\}_{i=1}^M
\text{ is }
(\tau,\varepsilon_{\rm in},\varepsilon_{\rm out})\text{-admissible} \Bigr\}\Bigr).
\label{eq:capacity_definition_main}
\end{multline}
\end{definition}

The supremum in \eqref{eq:capacity_definition_main} is taken in $\mathbb N_0\cup\{+\infty\}$; the explicit element $0$ ensures $C_D=0$ when no single-identity codebook satisfies the genuine constraint. Suppose two distinct codes in a codebook induce the same distribution $P$, and define
\[
p\triangleq \Pr_{E,E'\stackrel{\mathrm{i.i.d.}}{\sim}P} \left[\langle E,E'\rangle>\tau\right].
\]
The genuine acceptance probability for either identity is $p$, whereas their cross-identity non-match probability is $1-p$. Admissibility would therefore require $p\ge1-\varepsilon_{\rm in}$, $p\le\varepsilon_{\rm out}$. These inequalities cannot both hold when $\varepsilon_{\rm in}+\varepsilon_{\rm out}<1$. We refer to this condition as the nondegenerate regime.

Definition~\ref{def:operational_capacity_main} is finite-dimensional. A fixed pipeline $(g,\phi)$ has one embedding dimension. To define asymptotic growth with $D$, let $\mathcal D\subseteq\{2,3,\ldots\}$ be unbounded and let $\mathbf P \triangleq \{(g_D,\phi_D):D\in\mathcal D\}$ be a dimension-indexed family of pipelines of the form specified above, with $\phi_D:\mathcal Y_D\to\mathbb S^{D-1}$. Define
\begin{equation}
R(\tau,\varepsilon_{\rm in},\varepsilon_{\rm out};\mathbf P) \triangleq
\limsup_{\substack{D\to\infty\\D\in\mathcal D}} \frac{1}{D} \log
C_D(\tau,\varepsilon_{\rm in},\varepsilon_{\rm out};g_D,\phi_D),
\label{eq:rate_definition_main}
\end{equation}
where $\log$ denotes the natural logarithm. The rate is measured in nats per embedding dimension, with the extended-real conventions $\log 0=-\infty$ and $\log(+\infty)=+\infty$.

\begin{remark}[Role of embedding dimension]
\label{rem:embedding_dimension_role}
For a fixed trained recognizer, $C_D$ is evaluated at its native output dimension. The rate in \eqref{eq:rate_definition_main} instead concerns a dimension-indexed family of pipelines. Projecting the embeddings of one trained recognizer to lower-dimensional subspaces provides an effective-dimension sensitivity analysis, but does not realize the dimension-indexed family required by the asymptotic definition.
\end{remark}

\begin{remark}[Relation to standard verification metrics]
\label{rem:tmr_fmr_relation}
At threshold $\tau$, $\Pr \left[ \langle E_i^{(1)},E_i^{(2)}\rangle>\tau \right]$ is the genuine acceptance probability for identity $i$, whereas $\Pr \left[ \langle E_i^{(1)},E_j^{(1)}\rangle>\tau \right]$,  $i\ne j$, is the impostor match probability for pair $(i,j)$. Thus $\varepsilon_{\rm in}$ upper-bounds the genuine rejection probability for every identity, and $\varepsilon_{\rm out}$ upper-bounds the impostor match probability for every distinct identity pair. These are identity-wise and pairwise constraints rather than population-averaged verification rates.
\end{remark}

\begin{definition}[Protocol-calibrated fixed-code identity capacity]
\label{def:protocol_calibrated_capacity}
Let $\mathcal A$ denote a fully specified operating-point calibration protocol, including its calibration sample or distribution, preprocessing, score convention, and threshold-selection rule. Fix a target FMR
$\alpha\in[0,1)$ for which $\mathcal A$ returns a threshold for recognizer $\phi$, and denote that threshold by $\tau_{\mathcal A,\alpha}(\phi)\in[0,1)$. Define
\begin{equation}
C_D^{\mathcal A} (\alpha,\varepsilon_{\rm in},\varepsilon_{\rm out};g,\phi)
\triangleq C_D \left( \tau_{\mathcal A,\alpha}(\phi), \varepsilon_{\rm in}, \varepsilon_{\rm out};
g,\phi \right).
\label{eq:protocol_calibrated_capacity}
\end{equation}
For fixed $\mathcal A$ and $\phi$, the dependence on $\alpha$ enters through $\tau_{\mathcal A,\alpha}(\phi)$. The value of $\alpha$ alone does not determine a common numerical threshold across recognizers or calibration distributions. The calibration target $\alpha$ determines the threshold through $\mathcal A$, whereas $\varepsilon_{\rm out}$ bounds the impostor match probability for every distinct identity pair at that threshold. The definition does not require $\alpha=\varepsilon_{\rm out}$.
\end{definition}

\subsection{Prior-Constrained Random-Code Identity Capacity}

The fixed-code capacity in Definition~\ref{def:operational_capacity_main} is an existence statement, i.e., it optimizes over deterministic choices of latent codes and takes the supremum over deterministic codebook sizes for which the induced embedding distributions satisfy the prescribed verification constraints. In pipelines with an explicit identity-indexing latent or conditioning variable, candidate identity codes need not be selected by explicit codebook design; they may instead be sampled independently from a specified prior. This leads to a random-code analogue of the fixed-code capacity. This random-code formulation requires a latent variable that indexes a repeatable identity-conditional distribution; sampling independent image latents without such an identity index does not satisfy this requirement.

Let $P_C$ be a probability measure on $\mathcal Z$, and let $C_1,\ldots,C_M \stackrel{\mathrm{i.i.d.}}{\sim} P_C$. Define the distinct-code event $\mathcal D_M\triangleq\{C_i\ne C_j\text{ for all }i\ne j\}$. A sampled codebook is counted only on $\mathcal D_M$, consistently with the fixed-code definition. For each sampled code $C_i$, the induced embedding distribution is $P_{C_i}$ in the sense of \eqref{eq:Pc_def}. For compactness, let $\mathcal A_M(a,b)$ denote the event that $\{P_{C_i}\}_{i=1}^M$ is $(\tau,a,b)$-admissible. The resulting random family $\{P_{C_i}\}_{i=1}^M$ is admissible or not depending on the realized sample $(C_1,\ldots,C_M)$. We therefore define capacity in this setting as the supremum of codebook sizes for which pairwise distinctness and admissibility hold jointly with probability at least $1-\delta$ over the draw of the latent codes.

\begin{definition}[Prior-constrained random-code identity capacity]
\label{def:random_capacity_main}
Fix $(g,\phi)$, embedding dimension $D\ge 2$, threshold $\tau\in[0,1)$, tolerances $\varepsilon_{\rm in},\varepsilon_{\rm out}\in[0,1)$, a probability measure $P_C$ on $\mathcal Z$, and a failure-probability tolerance $\delta\in(0,1)$. Let $C_1,\ldots,C_M \stackrel{\mathrm{i.i.d.}}{\sim} P_C$, and let $P_{C_i}$ denote the identity-conditional embedding distribution induced by the random code $C_i$. The $(1-\delta)$ prior-constrained random-code identity capacity is defined by
\begin{multline}
\!\!\!\!\!\!
C_{D,\delta}^{\mathrm{rnd}} (\tau,\varepsilon_{\rm in},\varepsilon_{\rm out};g,\phi,P_C)
\triangleq \sup\Bigl(\{0\}\cup\Bigl\{ M\in\mathbb N:\;\;\; \\ \;\;\;
\Pr\bigl[ \mathcal D_M\cap \mathcal A_M(\varepsilon_{\rm in},\varepsilon_{\rm out}) \bigr]
\ge 1-\delta \Bigr\}\Bigr),
\label{eq:random_capacity_def_main}
\end{multline}
where the probability is with respect to $(C_1,\ldots,C_M)\sim P_C^{\otimes M}$.
\end{definition}

The kernel measurability assumed in Section~\ref{subsec:gen_fr_pipeline} implies that $\mathcal A_M(a,b)$ is measurable, while the standard Borel assumption on $\mathcal Z$ makes $\mathcal D_M$ measurable. Hence the event in \eqref{eq:random_capacity_def_main} is well defined.

\begin{proposition}[Random coding cannot exceed fixed-code capacity]
\label{prop:random_vs_deterministic_capacity}
For every $\delta\in(0,1)$,
\begin{equation}
C_{D,\delta}^{\mathrm{rnd}} (\tau,\varepsilon_{\rm in},\varepsilon_{\rm out};g,\phi,P_C)
\le C_D(\tau,\varepsilon_{\rm in},\varepsilon_{\rm out};g,\phi).
\label{eq:random_vs_deterministic_capacity}
\end{equation}
\end{proposition}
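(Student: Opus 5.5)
The plan is a direct existence argument: any codebook size admitted by the random-code supremum is also admitted by the fixed-code supremum. Compare the two sets inside the suprema in \eqref{eq:random_capacity_def_main} and \eqref{eq:capacity_definition_main}, and show that the first is contained in the second. Both sets also contain the element $0$, so monotonicity of the supremum over $\mathbb N_0\cup\{+\infty\}$ then gives \eqref{eq:random_vs_deterministic_capacity}. This handles every case at once, including $C^{\mathrm{rnd}}_{D,\delta}=+\infty$ (unbounded index set) and $C^{\mathrm{rnd}}_{D,\delta}=0$.

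Fix $M\in\mathbb N$ with $\Pr[\mathcal D_M\cap\mathcal A_M(\varepsilon_{\rm in},\varepsilon_{\rm out})]\ge 1-\delta$. Since $\delta\in(0,1)$, we have $1-\delta>0$, so this event has strictly positive probability under $P_C^{\otimes M}$. In particular it is nonempty: there is a realization $(c_1,\ldots,c_M)\in\mathcal Z^M$ lying in $\mathcal D_M\cap\mathcal A_M$. Measurability of the event, noted after Definition~\ref{def:random_capacity_main}, guarantees that its probability is well defined, and a measurable set of positive measure cannot be empty.

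For this realization, membership in $\mathcal D_M$ gives $c_i\ne c_j$ for all $i\ne j$. Membership in $\mathcal A_M(\varepsilon_{\rm in},\varepsilon_{\rm out})$ means that $\{P_{c_i}\}_{i=1}^M$ is $(\tau,\varepsilon_{\rm in},\varepsilon_{\rm out})$-admissible. The key point here is that admissibility is a deterministic property of the tuple of induced distributions $P_{c_i}$. The probabilities in \eqref{eq:admissible_intra_main}--\eqref{eq:admissible_inter_main} are taken over the nuisance variables only, with the codes held fixed at their realized values. Hence $M$ belongs to the set in \eqref{eq:capacity_definition_main}.

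The only subtle step is this last one. One must make sure that the event $\mathcal A_M$ is defined pointwise in the sampled codes, conditionally on $(C_1,\ldots,C_M)$, so that a realization in the event yields a genuine fixed codebook. This matches how $\mathcal A_M$ was introduced in the paper, so no obstacle is expected beyond stating it explicitly.
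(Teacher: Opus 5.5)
Your proposal is correct and follows essentially the same argument as the paper: since $\Pr[\mathcal D_M\cap\mathcal A_M(\varepsilon_{\rm in},\varepsilon_{\rm out})]\ge 1-\delta>0$, the event contains a realized pairwise-distinct codebook whose induced family is admissible, so every $M$ counted in the random-code supremum is also counted in the fixed-code supremum. Your explicit treatment of the element $0$ and the $+\infty$ case through monotonicity of the supremum is a more careful statement of the paper's final ``take the supremum'' step, not a different route.
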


\begin{proof}
A proof is provided in the Appendix.
\end{proof}

\subsection{Finite-Sample Evaluation on a Fixed Pipeline}
\label{subsec:finite_sample_evaluation}

The operational constraints can be evaluated on finite repeated-view samples from a fixed synthetic-face pipeline. For candidate identities $i\in[N]$ in the evaluated sample, let $K_i\ge2$ and let $E_i^{(1)},\ldots,E_i^{(K_i)}$ be unit embeddings produced by a fixed recognizer. Define the empirical genuine acceptance rate
\begin{equation}
\widehat p_i^{\rm gen}(\tau) =\frac{1}{\binom{K_i}{2}}
\sum_{1\le a<b\le K_i} \mathbf 1 \left\{\left\langle E_i^{(a)},E_i^{(b)}\right\rangle>\tau\right\},
\label{eq:empirical_genuine_rate}
\end{equation}
and, for $i\ne j$, the empirical impostor non-match rate
\begin{equation}
\widehat p_{ij}^{\rm imp}(\tau) =\frac{1}{K_iK_j}
\sum_{a=1}^{K_i}\sum_{b=1}^{K_j} \mathbf 1 \left\{\left\langle E_i^{(a)},E_j^{(b)}\right\rangle\le\tau\right\}.
\label{eq:empirical_impostor_rate}
\end{equation}
Retain identities satisfying $\widehat p_i^{\rm gen}(\tau)\ge1-\varepsilon_{\rm in}$ as vertices, and let $N_{\rm G}$ denote their number. Connect two retained identities $i$ and $j$ when $\widehat p_{ij}^{\rm imp}(\tau)\ge1-\varepsilon_{\rm out}$. The maximum-clique cardinality of this compatibility graph is the largest subset of candidate identities in the evaluated sample that satisfies both constraints under these empirical rates. We denote it by $\widehat C_{\mathcal S}(\tau,\varepsilon_{\rm in},\varepsilon_{\rm out})$, where $\mathcal S$ is the evaluated finite sample.

If the repeated views are sampled independently from the evaluation nuisance distribution, \eqref{eq:empirical_genuine_rate} is a U-statistic of order two and \eqref{eq:empirical_impostor_rate} is a two-sample U-statistic of order $(1,1)$; the summands are nevertheless dependent when they share an image. For a deterministic render grid, the same formulas are descriptive complete-pair rates rather than i.i.d.-sampling estimators. Accordingly, $\widehat C_{\mathcal S}$ is a finite-sample cardinality. Without additional sampling and latent-space coverage assumptions, it is neither an estimator nor a confidence bound for the latent-space supremum $C_D$. The implementation records whether the returned maximum-clique cardinality is exact; when an exact solution is not obtained, the returned cardinality is explicitly identified as a lower bound.

\section{Geometric Analysis of Fixed-Code Identity Capacity}
\label{sec:geometric_identity_capacity}

\subsection{Deterministic View-Invariant Fixed-Code Capacity}
\label{subsec:deterministic_capacity}

We first consider a view-invariant pipeline, in which the embedding associated with each latent identity code is constant almost surely with respect to nuisance variation. For a general pipeline, define the set of deterministically realizable embedding directions by
\begin{equation}
\!\!\!\!
\mathcal{V}_{g,\phi} \triangleq
\left\{ u\in\mathbb{S}^{D-1} : \exists\, c\in\mathcal{Z} \;\; \text{s.t.}\;\;
\phi(g(c,U))=u \;\; \text{a.s.} \right\}.
\label{eq:realizable_set_main}
\end{equation}
For a subset $\mathcal V\subseteq \mathbb S^{D-1}$ and an angle $\psi\in(0,\pi]$, define
\begin{multline}
A(\mathcal V,\psi) \triangleq \max\Bigl(\{0\}\cup\Bigl\{
M\in\mathbb N:\exists\, u_1,\ldots,u_M\in\mathcal V \\
\text{such that }
\angle(u_i,u_j)\ge \psi,\ \forall i\neq j \Bigr\}\Bigr).
\label{eq:restricted_spherical_code_main}
\end{multline}
In particular, for the full sphere $\mathbb S^{D-1}$, define
\begin{equation}
A_D(\psi)\triangleq A(\mathbb S^{D-1},\psi),
\label{eq:spherical_code_def_main}
\end{equation}
which is the classical spherical-code packing number in dimension $D$; see, e.g., \cite{rankin1955closest, conway2013sphere}.

\begin{remark}
The packing number may also be parameterized by an inner-product threshold. For $t\in[-1,1)$, define $A_{\mathrm{thr}}(\mathcal V,t) \triangleq \max\Bigl(\{0\}\cup\Bigl\{ M\in\mathbb N:\exists\,u_1,\ldots,u_M\in\mathcal V \text{s.t. }\langle u_i,u_j\rangle\le t,\ \forall i\neq j \Bigr\}\Bigr)$. Since $\angle(u_i,u_j)\ge \psi$ is equivalent to $\langle u_i,u_j\rangle\le \cos\psi$, one has $A(\mathcal V,\psi)=A_{\mathrm{thr}}(\mathcal V,\cos\psi)$ for every $\psi\in(0,\pi]$. In particular, $A_{\mathrm{thr}}(\mathcal V,\tau)=A(\mathcal V,\arccos\tau)$ for $\tau\in[0,1)$. We use the angular parameterization $A(\mathcal V,\psi)$.
\end{remark}

\begin{proposition}[Deterministic view-invariant fixed-code capacity]
\label{prop:deterministic_capacity_main}
Assume that, for every $c\in\mathcal Z$, there exists $v(c)\in\mathbb S^{D-1}$ such that
\begin{equation}
\phi(g(c,U))=v(c)\qquad\mathrm{a.s.}
\label{eq:pipeline_view_invariance}
\end{equation}
Then $\mathcal V_{g,\phi}=\{v(c):c\in\mathcal Z\}$. Let
\begin{equation}
\psi_\tau\triangleq\arccos(\tau).
\end{equation}
For every $\varepsilon_{\rm in},\varepsilon_{\rm out}\in[0,1)$,
\begin{equation}
C_D(\tau,\varepsilon_{\rm in},\varepsilon_{\rm out};g,\phi) =
A(\mathcal V_{g,\phi},\psi_\tau) \le
A_D(\psi_\tau).
\label{eq:deterministic_capacity_upper_main}
\end{equation}
If, in addition, every direction in $\mathbb S^{D-1}$ is deterministically realizable, i.e., for every $u\in\mathbb S^{D-1}$ there exists $c(u)\in\mathcal Z$ such that $\phi(g(c(u),U))=u$ a.s., then
\begin{equation}
C_D(\tau,\varepsilon_{\rm in},\varepsilon_{\rm out};g,\phi) = A_D(\psi_\tau),
\quad \forall\, \varepsilon_{\rm in},\varepsilon_{\rm out}\in[0,1).
\end{equation}
\end{proposition}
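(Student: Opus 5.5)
We prove the identity $\mathcal V_{g,\phi}=\{v(c):c\in\mathcal Z\}$ first, then the equality $C_D=A(\mathcal V_{g,\phi},\psi_\tau)$ by two inequalities. The final claims follow by monotonicity of $A$ in its set argument.

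\emph{Realizable set.} The inclusion $\supseteq$ is immediate from \eqref{eq:pipeline_view_invariance}. For $\subseteq$, suppose $u\in\mathcal V_{g,\phi}$ is witnessed by $c$. Then $\phi(g(c,U))=u$ a.s. and also $=v(c)$ a.s. Since both events have probability one, they intersect, so $u=v(c)$. Consequently, for every code $c$ we have $P_c=\delta_{v(c)}$.

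\emph{Reduction to point masses.} Fix any codebook $c_1,\ldots,c_M$ and write $v_i=v(c_i)$. All views are then a.s.\ equal to $v_i$. The genuine probability in \eqref{eq:admissible_intra_main} equals $\mathbf 1\{\langle v_i,v_i\rangle>\tau\}=1$, because $\tau<1$, so \eqref{eq:admissible_intra_main} holds for every $\varepsilon_{\rm in}\in[0,1)$. The impostor probability in \eqref{eq:admissible_inter_main} equals $\mathbf 1\{\langle v_i,v_j\rangle\le\tau\}\in\{0,1\}$. Since $1-\varepsilon_{\rm out}>0$, \eqref{eq:admissible_inter_main} holds if and only if $\langle v_i,v_j\rangle\le\tau$, that is, $\angle(v_i,v_j)\ge\psi_\tau$. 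The reduction therefore gives: the codebook is admissible exactly when the $v_i$ pairwise satisfy $\angle(v_i,v_j)\ge\psi_\tau$. This condition is independent of the tolerances.

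\emph{Two inequalities.} Take an admissible codebook of size $M$ with distinct codes. The directions $v_i$ lie in $\mathcal V_{g,\phi}$ and are pairwise separated by at least $\psi_\tau>0$, so they are automatically distinct. Hence $M\le A(\mathcal V_{g,\phi},\psi_\tau)$, and taking the supremum gives $C_D\le A(\mathcal V_{g,\phi},\psi_\tau)$.

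Conversely, take $u_1,\ldots,u_M\in\mathcal V_{g,\phi}$ pairwise separated by at least $\psi_\tau$, and pick $c_i$ with $v(c_i)=u_i$. The $u_i$ are distinct, so $c_i\ne c_j$. By the reduction this codebook is admissible, giving $C_D\ge M$. The case $M=0$ is trivial, and the case of unbounded $M$ yields $+\infty$ on both sides.

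\emph{Finiteness of the packing number.} One should also check that $A_D(\psi_\tau)$ is finite, so that the max in \eqref{eq:restricted_spherical_code_main} is attained. Since $\psi_\tau\in(0,\pi/2]$, the standard cap-area argument applies: disjoint caps of radius $\psi_\tau/2$ can be placed around the points, which bounds their number. This is the only mildly delicate point; the rest is bookkeeping.

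\emph{Upper bound and full realizability.} The inequality $A(\mathcal V,\psi)\le A_D(\psi)$ follows from $\mathcal V\subseteq\mathbb S^{D-1}$. Under full realizability, $\mathcal V_{g,\phi}=\mathbb S^{D-1}$, which gives equality with $A_D(\psi_\tau)$.
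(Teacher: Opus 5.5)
Your proposal is correct and follows essentially the same route as the paper's proof. You reduce each $P_c$ to the point mass at $v(c)$, note that the genuine constraint holds automatically because $\tau<1$, and note that the impostor constraint reduces to $\langle u_i,u_j\rangle\le\tau$ because $1-\varepsilon_{\rm out}>0$; you then match admissible codebooks with $\psi_\tau$-separated subsets of $\mathcal V_{g,\phi}$ in both directions, using $\psi_\tau>0$ to get distinctness, and you additionally verify $\mathcal V_{g,\phi}=\{v(c):c\in\mathcal Z\}$ and the finiteness of $A_D(\psi_\tau)$, which the paper leaves implicit (take the radius-$\psi_\tau/2$ caps open, or disjoint up to measure-zero boundaries, as in the paper's cap-volume bound).
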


\begin{proof}
A proof is provided in the Appendix.
\end{proof}

\subsection{Centered Identity Distributions}
\label{subsec:centered_identity_distributions}

Under nuisance variation, an identity-conditional embedding distribution need not be a point mass. For $u\in\mathbb S^{D-1}$ and $\rho\in[0,\pi]$, define the geodesic cap
\begin{equation}
\mathrm{Cap}(u,\rho) \triangleq \{x\in\mathbb S^{D-1}:\angle(x,u)\le \rho\}.
\label{eq:cap_definition_main}
\end{equation}

\begin{definition}[$(\rho,\eta)$-centered identity-conditional embedding distribution]
\label{def:centered_identity_distribution_main}
Fix $\rho\in[0,\pi]$ and $\eta\in[0,1)$. An identity-conditional embedding distribution $P_i$ is \textit{$(\rho,\eta)$-centered at} $u_i\in\mathbb S^{D-1}$ if
\begin{equation}
P_i\bigl(\mathrm{Cap}(u_i,\rho)\bigr)\ge 1-\eta.
\label{eq:centered_identity_condition_main}
\end{equation}
It is called \textit{$(\rho,\eta)$-centered} if such a $u_i$ exists.
\end{definition}

The parameter $\rho$ is the cap radius, and $\eta$ upper-bounds the probability mass outside the cap. Point-mass distributions are included. A point mass at $u$ is $(0,0)$-centered at $u$. Centeredness does not depend on the verification threshold. The strict inequality $2\rho<\arccos(\tau)$, imposed below, ensures that any two points in the same radius-$\rho$ cap have cosine score strictly greater than $\tau$.

\begin{definition}[$(\rho,\eta)$-restricted identity capacity]
\label{def:restricted_capacity_main}
For fixed $\rho\in[0,\pi)$ and $\eta\in[0,1)$, define
\begin{multline}
C_D^{(\rho,\eta)} (\tau,\varepsilon_{\rm in},\varepsilon_{\rm out};g,\phi)\\
\triangleq \sup\Bigl(\{0\}\cup\Bigl\{ M\in\mathbb N: \exists\,c_1,\ldots,c_M\in\mathcal Z
\text{ pairwise distinct},\\
\text{s.t.}\quad
\{P_{c_i}\}_{i=1}^M
\text{ is }
(\tau,\varepsilon_{\rm in},\varepsilon_{\rm out})\text{-admissible}\\
\text{and each }P_{c_i}\text{ is }(\rho,\eta)\text{-centered}
\Bigr\}\Bigr).
\label{eq:restricted_capacity_definition_main}
\end{multline}
\end{definition}

Definition~\ref{def:restricted_capacity_main} restricts the fixed-code operational capacity to families satisfying the specified centeredness condition. Consequently,
\begin{equation}
C_D^{(\rho,\eta)} (\tau,\varepsilon_{\rm in},\varepsilon_{\rm out};g,\phi)
\le C_D(\tau,\varepsilon_{\rm in},\varepsilon_{\rm out};g,\phi).
\label{eq:restricted_le_general_main}
\end{equation}

\begin{theorem}[Sufficient condition for admissibility]
\label{thm:sufficient_admissibility_main}
Suppose that each identity-conditional embedding distribution $P_i$ is $(\rho,\eta)$-centered at $u_i\in\mathbb S^{D-1}$ and that
\begin{align}
2\rho &<\arccos(\tau),
\label{eq:intra_condition_geometry_main}
\\
\angle(u_i,u_j) &\ge \arccos(\tau)+2\rho, \qquad i\ne j.
\label{eq:inter_condition_geometry_main}
\end{align}
Then the family $\{P_i\}_{i=1}^M$ is
\[
\bigl( \tau,\, 1-(1-\eta)^2,\, 1-(1-\eta)^2
\bigr)\text{-admissible}.
\]
\end{theorem}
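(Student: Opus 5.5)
The plan is to reduce both admissibility constraints to deterministic geometric statements about caps, and then lower-bound each probability by the probability that both embeddings land in their caps. Independence turns that lower bound into a product, which gives $(1-\eta)^2$.

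\emph{Genuine constraint.} Fix $i$ and let $E_i^{(1)},E_i^{(2)}\stackrel{\mathrm{i.i.d.}}{\sim}P_i$. On the event $\{E_i^{(1)}\in\mathrm{Cap}(u_i,\rho)\}\cap\{E_i^{(2)}\in\mathrm{Cap}(u_i,\rho)\}$, the triangle inequality for the geodesic metric $\angle$ on $\mathbb S^{D-1}$ gives
\[
\angle(E_i^{(1)},E_i^{(2)})\le \angle(E_i^{(1)},u_i)+\angle(u_i,E_i^{(2)})\le 2\rho<\arccos(\tau),
\]
using \eqref{eq:intra_condition_geometry_main}. Since $\arccos$ is strictly decreasing on $[-1,1]$, this yields $\langle E_i^{(1)},E_i^{(2)}\rangle>\tau$. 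By independence and \eqref{eq:centered_identity_condition_main}, the event has probability $P_i(\mathrm{Cap}(u_i,\rho))^2\ge(1-\eta)^2$. Hence the genuine rejection probability is at most $1-(1-\eta)^2$, which is \eqref{eq:admissible_intra_main} with $\varepsilon_{\rm in}=1-(1-\eta)^2$.

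\emph{Impostor constraint.} Fix $i\ne j$ with $E_i^{(1)}\sim P_i$ and $E_j^{(1)}\sim P_j$ independent. On the event that both land in their respective caps, the reverse triangle inequality gives
\[
\angle(E_i^{(1)},E_j^{(1)})\ge \angle(u_i,u_j)-\angle(E_i^{(1)},u_i)-\angle(E_j^{(1)},u_j)\ge \arccos(\tau)+2\rho-2\rho=\arccos(\tau),
\]
using \eqref{eq:inter_condition_geometry_main}. Because $\arccos(\tau)\in(0,\pi/2]$ and angles lie in $[0,\pi]$, this gives $\langle E_i^{(1)},E_j^{(1)}\rangle\le\tau$, which is a non-match under the paper's tie convention. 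By independence, this event has probability at least $(1-\eta)^2$, giving \eqref{eq:admissible_inter_main} with $\varepsilon_{\rm out}=1-(1-\eta)^2$.

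\emph{Remaining checks.} Both tolerances must lie in $[0,1)$, which holds because $\eta\in[0,1)$. The only step needing care is the geometry: $\angle$ must be a genuine metric on the sphere, which is the standard fact that the great-circle distance satisfies the triangle inequality. The tie convention must also match, which it does: the genuine case needs strict inequality, supplied by \eqref{eq:intra_condition_geometry_main}, and the impostor case needs only the weak inequality $\ge\arccos(\tau)$. Everything else is routine.
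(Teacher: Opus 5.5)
Your proposal is correct and follows the paper's proof essentially step for step. Like the paper, you apply the spherical triangle inequality on the event that both views land in their caps, using the strict bound $2\rho<\arccos(\tau)$ for genuine pairs and the reverse triangle inequality with the center-separation condition for impostor pairs, and then use independence to bound each probability below by $(1-\eta)^2$.
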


\begin{proof}
A proof is provided in the Appendix.
\end{proof}

Let $\varepsilon_\eta\triangleq1-(1-\eta)^2$. Then
\[
2\varepsilon_\eta<1 \quad\Longleftrightarrow\quad \eta<1-\frac{1}{\sqrt{2}},
\]
so the tolerance in Theorem~\ref{thm:sufficient_admissibility_main} is in the nondegenerate regime exactly under this condition.

\begin{remark}
Define
\begin{equation}
\psi_\tau(\rho) \triangleq \arccos(\tau)+2\rho.
\label{eq:effective_separation_main}
\end{equation}
Then \eqref{eq:inter_condition_geometry_main} is equivalently
\[
\angle(u_i,u_j)\ge\psi_\tau(\rho), \qquad i\ne j.
\]
\end{remark}

\subsection{Cap-Volume Upper Bound for Spherical Codes}

Let $\sigma_{D-1}$ denote the standard rotation-invariant surface measure on $\mathbb S^{D-1}$. For $\alpha\in[0,\pi]$ and any $u\in\mathbb S^{D-1}$, define
\begin{equation}
V_D(\alpha) \triangleq \frac{\sigma_{D-1}(\mathrm{Cap}(u,\alpha))}{\sigma_{D-1}(\mathbb{S}^{D-1})}.
\label{eq:cap_measure_main}
\end{equation}
By rotational invariance, $V_D(\alpha)$ does not depend on the choice of $u$. Equivalently,
\begin{equation}
V_D(\alpha) = \frac{\int_0^\alpha \sin^{D-2}\theta\,d\theta}{\int_0^\pi \sin^{D-2}\theta\,d\theta}.
\label{eq:cap_measure_integral_main}
\end{equation}

\begin{proposition}[Cap-volume upper bound for spherical codes]
\label{prop:cap_volume_converse_main}
For every $\psi\in(0,\pi]$,
\begin{equation}
A_D(\psi)\le \frac{1}{V_D(\psi/2)}.
\label{eq:cap_volume_upper_main}
\end{equation}
\end{proposition}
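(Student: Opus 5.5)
The plan is the standard volume (packing) argument: open caps of angular radius $\psi/2$ around the code points are pairwise disjoint, so their normalized measures sum to at most one. Fix $\psi\in(0,\pi]$ and any $M$ for which points $u_1,\ldots,u_M\in\mathbb S^{D-1}$ with $\angle(u_i,u_j)\ge\psi$ for $i\ne j$ exist. If no $M\ge1$ exists, then $A_D(\psi)=0$ and there is nothing to prove. Since $A_D(\psi)$ is defined as a maximum, it suffices to show $M\le 1/V_D(\psi/2)$ for every such $M$; this also shows the set of admissible $M$ is bounded, because $V_D(\psi/2)>0$.

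\textbf{Step 1: disjointness.} For each $i$, set $O_i\triangleq\{x\in\mathbb S^{D-1}:\angle(x,u_i)<\psi/2\}$. The angular distance $\angle$ is the geodesic metric on the sphere and satisfies the triangle inequality; I would take this as standard, or note that it follows from the spherical law of cosines. Suppose some $x$ lay in $O_i\cap O_j$ with $i\ne j$. Then
\[
\angle(u_i,u_j)\le\angle(u_i,x)+\angle(x,u_j)<\psi,
\]
which is a contradiction. Hence the open caps $O_i$ are pairwise disjoint. Open caps are used to handle equality $\angle(u_i,u_j)=\psi$; closed caps could then share boundary points.

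\textbf{Step 2: measure of open versus closed caps.} The boundary $\{x:\angle(x,u)=\psi/2\}$ is a codimension-one sphere, or a single point when $\psi/2=0$ (excluded here), so it has $\sigma_{D-1}$-measure zero. Equivalently, the integral formula \eqref{eq:cap_measure_integral_main} is continuous in $\alpha$. Therefore $\sigma_{D-1}(O_i)/\sigma_{D-1}(\mathbb S^{D-1})=V_D(\psi/2)$. Because $\psi/2>0$ and $\sin^{D-2}\theta>0$ on $(0,\psi/2]$, we also get $V_D(\psi/2)>0$.

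\textbf{Step 3: summation.} By disjointness and additivity of $\sigma_{D-1}$,
\[
M\,V_D(\psi/2)=\sum_{i=1}^M\frac{\sigma_{D-1}(O_i)}{\sigma_{D-1}(\mathbb S^{D-1})}\le1,
\]
so $M\le 1/V_D(\psi/2)$. Taking the maximum over admissible $M$ gives \eqref{eq:cap_volume_upper_main}.

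The only delicate point is the boundary case $\angle(u_i,u_j)=\psi$, which the open-cap and null-boundary argument in Steps 1 and 2 handles. The case $\psi=\pi$ is included: the bound becomes $M\le1/V_D(\pi/2)=2$, which is consistent with the code $\{u,-u\}$.
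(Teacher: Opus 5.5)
Your proof is correct and follows essentially the same route as the paper. Both arguments use the triangle inequality to show that the radius-$\psi/2$ caps have pairwise disjoint interiors, note that cap boundaries have measure zero, and sum equal cap measures; the only cosmetic difference is that you sum over the open caps after checking that each has normalized measure $V_D(\psi/2)$, whereas the paper sums over the closed caps and observes that their overlaps lie in null boundaries.
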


\begin{proof}
A proof is provided in the Appendix.
\end{proof}

\subsection{Achievable Lower Bounds under Angular Expressivity}
\label{subsec:angular_expressivity}

To derive achievability bounds, we require that the generator--recognizer pair can realize centered identity-conditional distributions at prescribed directions on the sphere.

\begin{definition}[Full $(\rho,\eta)$-angular expressivity]
\label{def:full_angular_expressivity_main}
The pair $(g,\phi)$ is \textit{fully $(\rho,\eta)$-angularly expressive} if, for every finite set of pairwise-distinct directions $u_1,\ldots,u_M\in\mathbb S^{D-1}$, there exist pairwise-distinct latent codes $c_1,\ldots,c_M\in\mathcal Z$ such that $P_{c_i}$ is $(\rho,\eta)$-centered at $u_i$ for every $i\in[M]$.
\end{definition}

\begin{remark}[Interpretation of angular expressivity]
Full $(\rho,\eta)$-angular expressivity is a global realizability assumption used to derive spherical-code achievability bounds. It is not assumed for the pipelines used in the finite-sample evaluation.
\end{remark}

\begin{theorem}[Achievability under full angular expressivity]
\label{thm:achievability_main}
Assume that $(g,\phi)$ is fully $(\rho,\eta)$-angularly expressive and that $2\rho<\arccos(\tau)$. Then
\begin{equation}
C_D^{(\rho,\eta)} \Bigl( \tau,\, 1-(1-\eta)^2,\, 1-(1-\eta)^2; g,\phi \Bigr)
\ge A_D\bigl(\psi_\tau(\rho)\bigr),
\label{eq:achievability_main}
\end{equation}
where $\psi_\tau(\rho)$ is defined in \eqref{eq:effective_separation_main}.
\end{theorem}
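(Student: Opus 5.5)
The argument is a direct composition of the definitions with Theorem~\ref{thm:sufficient_admissibility_main}. Set $M^\star\triangleq A_D(\psi_\tau(\rho))$. If $M^\star=0$ there is nothing to prove, since the supremum defining $C_D^{(\rho,\eta)}$ includes $0$. Otherwise $M^\star$ is finite by Proposition~\ref{prop:cap_volume_converse_main}: it applies because $\psi_\tau(\rho)>0$, and it gives $M^\star\le 1/V_D(\psi_\tau(\rho)/2)<\infty$. So the maximum in \eqref{eq:restricted_spherical_code_main} is attained.

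We also need $\psi_\tau(\rho)\in(0,\pi]$ for $A_D$ to be defined. If $\psi_\tau(\rho)>\pi$, we interpret $A_D$ as $1$, because a single point vacuously satisfies the pairwise constraint. The argument below then goes through unchanged with $M^\star=1$.

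Fix any $M\le M^\star$ with $M\in\mathbb N$, and pick $u_1,\ldots,u_M\in\mathbb S^{D-1}$ with $\angle(u_i,u_j)\ge\psi_\tau(\rho)$ for all $i\ne j$. These directions are pairwise distinct, because $\psi_\tau(\rho)\ge\arccos(\tau)>0$ (recall $\tau<1$), so $\angle(u_i,u_j)>0$.

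By full $(\rho,\eta)$-angular expressivity (Definition~\ref{def:full_angular_expressivity_main}), there exist pairwise-distinct codes $c_1,\ldots,c_M$ such that each $P_{c_i}$ is $(\rho,\eta)$-centered at $u_i$. Each $P_{c_i}$ is therefore $(\rho,\eta)$-centered in the sense required by Definition~\ref{def:restricted_capacity_main}.

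The hypotheses of Theorem~\ref{thm:sufficient_admissibility_main} now hold:
\begin{itemize}
\item $2\rho<\arccos(\tau)$ is assumed;
\item the pairwise separation \eqref{eq:inter_condition_geometry_main} holds by the choice of the $u_i$.
\end{itemize}
Hence $\{P_{c_i}\}_{i=1}^M$ is $(\tau,1-(1-\eta)^2,1-(1-\eta)^2)$-admissible.

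Since both tolerances equal $1-(1-\eta)^2$, which lies in $[0,1)$ because $\eta<1$, this $M$ belongs to the set in \eqref{eq:restricted_capacity_definition_main}. Taking $M=M^\star$ gives $C_D^{(\rho,\eta)}\ge M^\star$.

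The only delicate points are bookkeeping. We must ensure that $A_D(\psi_\tau(\rho))$ is finite and attained, which is handled by Proposition~\ref{prop:cap_volume_converse_main}. We must also handle the edge case $\psi_\tau(\rho)>\pi$, in which only $M=1$ is realizable. No substantive obstacle remains, since all probabilistic work is delegated to Theorem~\ref{thm:sufficient_admissibility_main}.
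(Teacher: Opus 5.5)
Your proof is correct and follows the paper's argument: take a maximum spherical code at minimum angle $\psi_\tau(\rho)$, realize its directions as cap centers with pairwise-distinct codes via full $(\rho,\eta)$-angular expressivity, and invoke Theorem~\ref{thm:sufficient_admissibility_main}. Your extra bookkeeping is harmless, although the case $\psi_\tau(\rho)>\pi$ cannot occur: $\tau\ge 0$ and $2\rho<\arccos(\tau)$ give $\psi_\tau(\rho)<2\arccos(\tau)\le\pi$.
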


\begin{proof}
A proof is provided in the Appendix.
\end{proof}

\begin{remark}[Heterogeneous identity radii]
Suppose that identity $i$ is $(\rho_i,\eta)$-centered at $u_i$, with a common $\eta\in[0,1)$. The argument of Theorem~\ref{thm:sufficient_admissibility_main} shows that the following conditions are sufficient:
\begin{align}
2\rho_i &<\arccos(\tau), \qquad &&\forall i,\\
\angle(u_i,u_j) &\ge\arccos(\tau)+\rho_i+\rho_j,
\qquad &&\forall i\ne j.
\end{align}
Under these conditions, the family is
\[
\bigl( \tau,\, 1-(1-\eta)^2,\, 1-(1-\eta)^2
\bigr)\text{-admissible}.
\]
This heterogeneous form is used in the finite-sample geometry check below.
\end{remark}

Using the finite sample $\mathcal S$ from Section~\ref{subsec:finite_sample_evaluation}, define the finite-sample mean embedding $\widehat m_i \triangleq \frac{1}{K_i}\sum_{k=1}^{K_i}E_i^{(k)}$. Fix $\zeta=10^{-12}$. For identities satisfying $\|\widehat m_i\|_2>\zeta$, define the empirical mean direction
\begin{equation}
\widehat u_i \triangleq \frac{\widehat m_i}{\|\widehat m_i\|_2}.
\label{eq:empirical_mean_direction_main}
\end{equation}
If $\|\widehat m_i\|_2\le\zeta$, $\widehat u_i$ and the corresponding empirical radius are undefined. For every identity for which $\widehat u_i$ is defined, let $a_{i,k} \triangleq \angle(E_i^{(k)},\widehat u_i)$, $k\in[K_i]$, and let $a_{i,(1)}\le\cdots\le a_{i,(K_i)}$ denote the corresponding order statistics. Define
\begin{equation}
\widehat\rho_i \triangleq a_{i,(\lceil(1-\eta)K_i\rceil)},
\label{eq:empirical_radius_main}
\end{equation}
where $\eta\in[0,1)$ is the empirical coverage-tail parameter. Then $\widehat\rho_i$ is the smallest observed angular radius about $\widehat u_i$ containing at least a fraction $1-\eta$ of the evaluated views. The operational rates in \eqref{eq:empirical_genuine_rate}--\eqref{eq:empirical_impostor_rate} remain well defined when $\widehat u_i$ is undefined.

The corresponding finite-sample geometric check requires
\begin{equation}
\widehat\rho_i+\widehat\rho_j+\arccos(\tau) \le \angle(\widehat u_i,\widehat u_j)
\label{eq:heterogeneous_empirical_sufficient}
\end{equation}
for every distinct pair of identities included in the check, together with $2\widehat\rho_i<\arccos(\tau)$ for every included identity. These conditions are used only for a finite-sample geometric check. Without a statistical coverage argument, they do not imply population-level admissibility. They also do not, by themselves, imply the empirical admissibility constraints defining $\widehat C_{\mathcal S}$.

\subsection{Asymptotic Lower Bounds}
\label{subsec:asymptotic_growth_main}

We next derive finite-dimensional and asymptotic lower bounds for $A_D(\psi)$ and the corresponding identity capacities.

\begin{proposition}[Covering lower bound]
\label{prop:covering_lower_bound_main}
For every $\psi\in(0,\pi)$,
\begin{equation}
A_D(\psi)\ge \frac{1}{V_D(\psi)}.
\label{eq:covering_lower_bound_main}
\end{equation}
\end{proposition}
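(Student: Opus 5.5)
We use the standard maximal-packing (Gilbert–Varshamov-type) covering argument. Fix $\psi\in(0,\pi)$. We first show that codes with minimum angle $\psi$ have bounded size, so that a maximal one exists. By Proposition~\ref{prop:cap_volume_converse_main}, any such code has cardinality at most $1/V_D(\psi/2)$. This bound is finite because $\psi/2>0$, which gives $V_D(\psi/2)>0$ by \eqref{eq:cap_measure_integral_main}. The maximum in \eqref{eq:restricted_spherical_code_main} is therefore attained.

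Let $u_1,\ldots,u_M$ be a code attaining $A_D(\psi)$; note $M\ge1$, since a single point is trivially a code. We claim that the caps $\mathrm{Cap}(u_i,\psi)$ cover $\mathbb S^{D-1}$. Suppose instead that some $x\in\mathbb S^{D-1}$ satisfies $\angle(x,u_i)>\psi$ for every $i$. Then $\{u_1,\ldots,u_M,x\}$ has pairwise angles at least $\psi$, and $x$ differs from every $u_i$ because each angle is positive. This gives a code of size $M+1$, contradicting maximality.

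Countable subadditivity of $\sigma_{D-1}$ and rotational invariance then give
\[
1=\frac{\sigma_{D-1}(\mathbb S^{D-1})}{\sigma_{D-1}(\mathbb S^{D-1})}\le\sum_{i=1}^M\frac{\sigma_{D-1}(\mathrm{Cap}(u_i,\psi))}{\sigma_{D-1}(\mathbb S^{D-1})}=M\,V_D(\psi).
\]
Since $V_D(\psi)>0$, this yields $A_D(\psi)=M\ge 1/V_D(\psi)$.

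The only real subtlety is guaranteeing that a maximal code exists, so that the maximum in the definition is finite and attained. Proposition~\ref{prop:cap_volume_converse_main} handles this. Alternatively, one can extend any code greedily; the size is bounded, so the extension terminates at a code that cannot be extended. The covering argument applies to that code, and $A_D(\psi)$ is at least its size, which gives the same inequality. The remaining points are routine. The closed-cap convention $\le\rho$ in \eqref{eq:cap_definition_main} matches the strict inequality used for the uncovered point $x$. The restriction $\psi<\pi$ only ensures $V_D(\psi)\le1$ is meaningful; the argument itself holds for any $\psi\in(0,\pi)$.
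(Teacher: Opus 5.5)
Your proof is correct and follows the paper's argument: a maximum-cardinality code is maximal under inclusion, so the closed caps $\mathrm{Cap}(u_i,\psi)$ cover $\mathbb S^{D-1}$, and subadditivity of $\sigma_{D-1}$ then gives $1\le M\,V_D(\psi)$. Two small additions to note. First, you invoke Proposition~\ref{prop:cap_volume_converse_main} to show that the maximum is finite and attained; the paper leaves this implicit, and your use of it is not circular. Second, your closing remark about $\psi<\pi$ is slightly off, since $V_D(\psi)\le 1$ holds for all $\psi$ and the argument works verbatim at $\psi=\pi$; this is harmless to the proof.
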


\begin{proof}
A proof is provided in the Appendix.
\end{proof}

For every $\psi\in(0,\pi)$, Propositions~\ref{prop:cap_volume_converse_main} and~\ref{prop:covering_lower_bound_main} give
\begin{equation}
\frac{1}{V_D(\psi)} \le A_D(\psi)
\le \frac{1}{V_D(\psi/2)}.
\label{eq:packing_sandwich_main}
\end{equation}
Using the fixed-angle asymptotic for spherical-cap measure gives the following lower bounds; see also~\cite{kabatiansky1978bounds}.

\begin{theorem}[Asymptotic spherical-code and capacity lower bounds]
\label{thm:asymptotic_lower_bound_main}
For every $\psi\in(0,\pi/2)$,
\begin{equation}
\liminf_{D\to\infty} \frac{1}{D}\log A_D(\psi) \ge -\log(\sin\psi).
\label{eq:spherical_code_rate_lower_main}
\end{equation}
Consequently, let $\mathbf P=\{(g_D,\phi_D):D\in\mathcal D\}$ be a dimension-indexed family, and fix $\rho\in[0,\pi)$, $\eta\in[0,1)$, and $\tau\in[0,1)$. Suppose that, for every sufficiently large $D\in\mathcal D$, $(g_D,\phi_D)$ is fully $(\rho,\eta)$-angularly expressive, $2\rho<\arccos(\tau)$, and
$\psi_\tau(\rho)\in(0,\pi/2)$. Then
\begin{multline}
\liminf_{\substack{D\to\infty\\D\in\mathcal D}} \frac{1}{D}\log
C_D^{(\rho,\eta)}\Bigl(\tau,1-(1-\eta)^2,  1-(1-\eta)^2;g_D,\phi_D\Bigr) \\
 \ge -\log \bigl(\sin\psi_\tau(\rho)\bigr).
\label{eq:restricted_capacity_rate_lower_main}
\end{multline}
\end{theorem}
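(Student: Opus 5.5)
The proof has two parts: the spherical-code rate bound \eqref{eq:spherical_code_rate_lower_main}, which comes from the covering lower bound and a cap-measure estimate, and the capacity statement \eqref{eq:restricted_capacity_rate_lower_main}, which is a direct transfer through Theorem~\ref{thm:achievability_main}.

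For the first part, Proposition~\ref{prop:covering_lower_bound_main} gives $\log A_D(\psi)\ge -\log V_D(\psi)$. It therefore suffices to show
\[
\limsup_{D\to\infty}\frac1D\log V_D(\psi)\le \log\sin\psi
\]
for $\psi\in(0,\pi/2)$. We bound the numerator and denominator of \eqref{eq:cap_measure_integral_main} separately.

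For the numerator, $\sin$ is increasing on $[0,\psi]\subset[0,\pi/2)$. Hence
\[
\int_0^\psi\sin^{D-2}\theta\,d\theta\le \psi\,\sin^{D-2}\psi .
\]

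For the denominator, we use an elementary lower bound rather than exact Wallis asymptotics. Fix any small $\beta>0$. On $[\pi/2-\beta,\pi/2+\beta]$ we have $\sin\theta\ge\cos\beta$, so
\[
\int_0^\pi \sin^{D-2}\theta\,d\theta\ge 2\beta\cos^{D-2}\beta .
\]
Combining the two bounds,
\[
\frac1D\log V_D(\psi)\le \frac{D-2}{D}\log\sin\psi-\frac{D-2}{D}\log\cos\beta+O(1/D).
\]
Letting $D\to\infty$ and then $\beta\downarrow0$ gives the claim. Alternatively, the denominator can be handled by the Wallis estimate $\int_0^\pi\sin^{D-2}\theta\,d\theta\sim\sqrt{2\pi/D}$, which is subexponential. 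The restriction $\psi<\pi/2$ is used only so that $\sin\psi<1$ and the monotonicity bound on the numerator holds.

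For the second part, fix the family $\mathbf P$. For all sufficiently large $D\in\mathcal D$, the hypotheses of Theorem~\ref{thm:achievability_main} hold: full $(\rho,\eta)$-angular expressivity and $2\rho<\arccos(\tau)$. That theorem gives
\[
C_D^{(\rho,\eta)}\bigl(\tau,1-(1-\eta)^2,1-(1-\eta)^2;g_D,\phi_D\bigr)\ge A_D\bigl(\psi_\tau(\rho)\bigr).
\]
Here $\psi_\tau(\rho)$ does not depend on $D$, and it lies in $(0,\pi/2)$ by hypothesis. Taking $\frac1D\log$ of both sides, the $\liminf$ along $D\in\mathcal D$ is at least the $\liminf$ of $\frac1D\log A_D(\psi_\tau(\rho))$ along that subsequence. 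A $\liminf$ along a subsequence dominates the full $\liminf$, so the first part applied with $\psi=\psi_\tau(\rho)$ gives \eqref{eq:restricted_capacity_rate_lower_main}. The extended-real conventions cause no trouble, because $A_D(\psi)\ge1$ and the capacity could only be larger, possibly $+\infty$.

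The main obstacle is the denominator estimate: the normalizing integral must decay at most subexponentially. The fixed-$\beta$ bound above avoids exact asymptotics and needs only the final $\beta\downarrow0$ limit.
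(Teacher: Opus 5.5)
Your proposal is correct and follows the paper's argument: the covering bound $A_D(\psi)\ge 1/V_D(\psi)$ plus the exponential decay rate of $V_D(\psi)$, and then Theorem~\ref{thm:achievability_main} at each sufficiently large $D\in\mathcal D$ with the $\liminf$ taken along that subsequence. The one difference is that the paper cites the two-sided fixed-angle asymptotic $\lim_{D\to\infty}\frac{1}{D}\log V_D(\alpha)=\log(\sin\alpha)$ without proof, whereas you prove the one-sided bound $\limsup_{D\to\infty}\frac{1}{D}\log V_D(\psi)\le\log\sin\psi$, which is all that is needed, via elementary estimates of the numerator and denominator; this makes your version self-contained.
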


\begin{proof}
A proof is provided in the Appendix.
\end{proof}

\begin{corollary}[Deterministic view-invariant asymptotic lower bound]
\label{cor:deterministic_asymptotic_main}
Let $\mathbf P=\{(g_D,\phi_D):D\in\mathcal D\}$ be a dimension-indexed family. Suppose that, for every sufficiently large $D\in\mathcal D$, the embedding output is invariant to nuisance variation for every latent code and $\mathcal V_{g_D,\phi_D}=\mathbb S^{D-1}$. Then, for every $\tau\in(0,1)$ and $\varepsilon_{\rm in},\varepsilon_{\rm out}\in[0,1)$,
\begin{align}
R(\tau,\varepsilon_{\rm in},\varepsilon_{\rm out};\mathbf P)
&= \limsup_{\substack{D\to\infty\\D\in\mathcal D}} \frac{1}{D}
\log A_D\bigl(\arccos(\tau)\bigr) \nonumber\\
&\ge -\log \bigl(\sin(\arccos(\tau))\bigr) \nonumber\\
&= -\frac{1}{2}\log(1-\tau^2) >0.
\label{eq:deterministic_asymptotic_rate_main}
\end{align}
\end{corollary}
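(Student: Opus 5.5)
The plan is to combine Proposition~\ref{prop:deterministic_capacity_main} with the spherical-code rate bound of Theorem~\ref{thm:asymptotic_lower_bound_main}.

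\textbf{Step 1: identify the capacity with a packing number.} Fix $D\in\mathcal D$ large enough that the hypotheses hold. By assumption, for every $c$ there is $v(c)$ with $\phi_D(g_D(c,U))=v(c)$ a.s., and $\mathcal V_{g_D,\phi_D}=\mathbb S^{D-1}$. Hence every direction is deterministically realizable. The second part of Proposition~\ref{prop:deterministic_capacity_main} then gives
\[
C_D(\tau,\varepsilon_{\rm in},\varepsilon_{\rm out};g_D,\phi_D)=A_D(\arccos\tau)
\]
for all $\varepsilon_{\rm in},\varepsilon_{\rm out}\in[0,1)$. Since this holds for all sufficiently large $D\in\mathcal D$, and a limsup ignores finitely many terms, the first equality in \eqref{eq:deterministic_asymptotic_rate_main} follows directly from \eqref{eq:rate_definition_main}.

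\textbf{Step 2: apply the rate lower bound.} For $\tau\in(0,1)$ we have $\psi\triangleq\arccos(\tau)\in(0,\pi/2)$, so \eqref{eq:spherical_code_rate_lower_main} applies. The full-sequence liminf bound gives
\[
\limsup_{D\in\mathcal D}\frac{1}{D}\log A_D(\psi)\;\ge\;\liminf_{D\in\mathcal D}\frac{1}{D}\log A_D(\psi)\;\ge\;\liminf_{D\to\infty}\frac{1}{D}\log A_D(\psi)\;\ge\;-\log\sin\psi.
\]
The middle inequality uses that the liminf along the unbounded subset $\mathcal D$ is at least the liminf over all $D$. Finitely many values of $A_D$ are finite and at least $1$, so no extended-real issues arise. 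Alternatively, one can apply Theorem~\ref{thm:asymptotic_lower_bound_main}'s capacity statement with $\rho=\eta=0$: a point mass is $(0,0)$-centered, so full realizability gives full $(0,0)$-angular expressivity, and $2\rho=0<\arccos\tau$. The direct route above is simpler, however.

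\textbf{Step 3: evaluate the constant.} We have $\sin(\arccos\tau)=\sqrt{1-\tau^2}$, so $-\log\sin\psi=-\tfrac12\log(1-\tau^2)$. This is strictly positive because $0<1-\tau^2<1$ when $\tau\in(0,1)$.

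\textbf{Main obstacle.} The only delicate point is the subsequence bookkeeping: passing from a liminf over all $D$ to a limsup over $\mathcal D$, and noting that the exclusion $\tau=0$ is needed because $\psi=\pi/2$ falls outside the range of Theorem~\ref{thm:asymptotic_lower_bound_main} and would give rate $0$. Everything else is a direct substitution.
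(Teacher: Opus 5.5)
Your proposal is correct and follows the paper's proof. Like the paper, you invoke Proposition~\ref{prop:deterministic_capacity_main} to get $C_D(\tau,\varepsilon_{\rm in},\varepsilon_{\rm out};g_D,\phi_D)=A_D(\arccos\tau)$ for all sufficiently large $D\in\mathcal D$, then apply \eqref{eq:spherical_code_rate_lower_main} with $\psi=\arccos\tau\in(0,\pi/2)$, and finally use $\sin(\arccos\tau)=\sqrt{1-\tau^2}$. Your explicit chain from the limsup over $\mathcal D$ to the liminf over $\mathcal D$ to the liminf over all $D$ makes precise a subsequence step that the paper leaves implicit.
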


\begin{proof}
A proof is provided in the Appendix.
\end{proof}

Theorem~\ref{thm:asymptotic_lower_bound_main} yields a positive asymptotic lower-bound exponent for dimension-indexed families satisfying its uniform assumptions and $\psi_\tau(\rho)<\pi/2$. Within this regime, increasing $\rho$ or decreasing $\tau$ increases the sufficient center-separation threshold $\psi_\tau(\rho)$ and decreases the lower-bound exponent $-\log(\sin\psi_\tau(\rho))$.

\section{Prior-Constrained Random-Code Identity Capacity}
\label{sec:random_code_capacity}

By Proposition~\ref{prop:random_vs_deterministic_capacity}, the prior-constrained random-code capacity is upper-bounded by the corresponding fixed-code capacity.

\subsection{Random-Code Admissibility via Center Separation}

The random-code analysis below uses the centered model of Definition~\ref{def:centered_identity_distribution_main}.

\begin{assumption}[Common centered-distribution parameters under random coding]
\label{ass:random_centered_model}
There exist $\rho\in[0,\pi)$, $\eta\in[0,1)$, and a measurable map $u:\mathcal Z\to\mathbb S^{D-1}$ such that, for every $c\in\mathcal Z$, the induced identity-conditional embedding distribution $P_c$ is $(\rho,\eta)$-centered at $u(c)$.
\end{assumption}

Under Assumption~\ref{ass:random_centered_model}, a random latent code $C\sim P_C$ induces a random identity center $U\triangleq u(C)\in\mathbb S^{D-1}$. Let $Q$ denote the induced distribution of $U$ on $\mathbb S^{D-1}$; equivalently, $Q(B)=P_C\bigl(u^{-1}(B)\bigr)$ for every Borel set $B\subseteq\mathbb S^{D-1}$.

\begin{proposition}[Random-code admissibility via center separation]
\label{prop:random_centered_admissibility_main}
Suppose Assumption~\ref{ass:random_centered_model} holds and
\begin{equation}
2\rho<\arccos(\tau).
\label{eq:random_intra_condition}
\end{equation}
Let $C_1,\ldots,C_M\stackrel{\mathrm{i.i.d.}}{\sim}P_C$ and define
\begin{equation}
U_i\triangleq u(C_i),\qquad i\in[M].
\end{equation}
If
\begin{equation}
\angle(U_i,U_j)\ge\psi_\tau(\rho), \qquad \forall i\ne j,
\label{eq:random_pairwise_center_condition}
\end{equation}
where $\psi_\tau(\rho)$ is defined in \eqref{eq:effective_separation_main}, then the family $\{P_{C_i}\}_{i=1}^M$ is
\[
\bigl( \tau,\, 1-(1-\eta)^2,\, 1-(1-\eta)^2 \bigr)\text{-admissible}.
\]
Consequently,
\begin{align}
&\Pr \left[ \mathcal D_M\cap \mathcal A_M \left(
1-(1-\eta)^2,\, 1-(1-\eta)^2 \right) \right]
\nonumber\\
&\qquad \ge
\Pr \left[ \angle(U_i,U_j)\ge\psi_\tau(\rho), \, \forall i\ne j
\right].
\label{eq:random_admissibility_via_centers_main}
\end{align}
\end{proposition}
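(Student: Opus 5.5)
The plan has two parts. First, pointwise on the sample space, I reduce the admissibility claim to Theorem~\ref{thm:sufficient_admissibility_main}. Second, I handle distinctness and take probabilities.

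For the pointwise step, fix a realization $(c_1,\ldots,c_M)$ of $(C_1,\ldots,C_M)$ satisfying \eqref{eq:random_pairwise_center_condition}. By Assumption~\ref{ass:random_centered_model}, each $P_{c_i}$ is $(\rho,\eta)$-centered at $u_i=u(c_i)$. Condition \eqref{eq:random_intra_condition} is exactly \eqref{eq:intra_condition_geometry_main}, and \eqref{eq:random_pairwise_center_condition} is exactly \eqref{eq:inter_condition_geometry_main}. Theorem~\ref{thm:sufficient_admissibility_main} therefore applies verbatim to the deterministic family $\{P_{c_i}\}_{i=1}^M$ and gives $(\tau,1-(1-\eta)^2,1-(1-\eta)^2)$-admissibility. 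The probabilities in the admissibility constraints are taken over fresh nuisance views for the fixed codes, so conditioning on the codes causes no issue; admissibility is a property of the realized family of distributions.

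For the probability bound, I need the event $\{\angle(U_i,U_j)\ge\psi_\tau(\rho)\ \forall i\ne j\}$ to be contained in $\mathcal D_M\cap\mathcal A_M(\cdot,\cdot)$. The key observation is that $\psi_\tau(\rho)=\arccos(\tau)+2\rho\ge\arccos(\tau)>0$, since $\tau<1$. So the separation event forces $U_i\ne U_j$ for $i\ne j$. Because $U_i=u(C_i)$ and $u$ is a function, $U_i\ne U_j$ implies $C_i\ne C_j$, which places the realization in $\mathcal D_M$. Combined with the pointwise step, this gives the inclusion of events. Monotonicity of probability then yields \eqref{eq:random_admissibility_via_centers_main}.

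Measurability should be checked. The left-hand event is measurable by the remarks after Definition~\ref{def:random_capacity_main}. The separation event is measurable because $u$ is measurable and $(x,y)\mapsto\angle(x,y)$ is continuous.

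The only mildly delicate point is the distinctness step, that is, ensuring the separation event implies $\mathcal D_M$. It relies on $\psi_\tau(\rho)>0$, which holds because $\arccos(\tau)>0$ for $\tau\in[0,1)$. Everything else is a direct invocation of Theorem~\ref{thm:sufficient_admissibility_main}.
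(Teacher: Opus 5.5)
Your proposal is correct and follows the paper's proof essentially step for step. You reduce admissibility pointwise to Theorem~\ref{thm:sufficient_admissibility_main}, then use $\psi_\tau(\rho)>0$ to get $U_i\ne U_j$ and hence $C_i\ne C_j$, so the separation event lies in $\mathcal D_M\cap\mathcal A_M$; your added measurability check for the separation event is a harmless detail the paper leaves implicit.
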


\begin{proof}
A proof is provided in the Appendix.
\end{proof}

Proposition~\ref{prop:random_centered_admissibility_main} provides a sufficient condition for random-code admissibility in terms of pairwise separation of the induced identity centers.

\begin{definition}[Pairwise separation-failure probability]
\label{def:center_collision_probability}
For $\psi>0$, define
\begin{equation}
q_Q(\psi) \triangleq \Pr \left[\angle(U_1,U_2)<\psi\right],
\label{eq:center_collision_probability}
\end{equation}
where $U_1,U_2\stackrel{\mathrm{i.i.d.}}{\sim}Q$.
\end{definition}

Thus, $q_Q(\psi)$ is the probability that two independent identity centers drawn from $Q$ have angular separation strictly smaller than $\psi$.

\subsection{High-Probability Lower Bounds}

\begin{theorem}[Union-bound lower bound for random-code capacity]
\label{thm:random_capacity_union_bound_main}
Suppose Assumption~\ref{ass:random_centered_model} holds and $2\rho<\arccos(\tau)$. Let $Q$ be the induced distribution of identity centers on $\mathbb S^{D-1}$. Then, for every integer $M\ge2$,
\begin{multline}
\Pr\bigl[ \mathcal D_M\cap \mathcal A_M\bigl(1-(1-\eta)^2,1-(1-\eta)^2\bigr) \bigr] \\
\ge 1-\binom{M}{2}\, q_Q\bigl(\psi_\tau(\rho)\bigr).
\label{eq:union_bound_admissibility_main}
\end{multline}
Consequently, for any $\delta\in(0,1)$, if
\begin{equation}
\binom{M}{2}\, q_Q\bigl(\psi_\tau(\rho)\bigr) \le\delta,
\label{eq:union_bound_capacity_condition_main}
\end{equation}
then
\begin{equation}
C_{D,\delta}^{\mathrm{rnd}} \Bigl(
\tau,\, 1-(1-\eta)^2,\, 1-(1-\eta)^2;\, g,\phi,P_C \Bigr) \ge M.
\label{eq:random_capacity_lower_bound_main}
\end{equation}
\end{theorem}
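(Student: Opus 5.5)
The plan is to combine Proposition~\ref{prop:random_centered_admissibility_main} with a union bound over the $\binom{M}{2}$ unordered pairs. Write $\mathcal G_M\triangleq\{\angle(U_i,U_j)\ge\psi_\tau(\rho),\ \forall i\ne j\}$ with $U_i=u(C_i)$. Because $u$ is measurable and the angle map is continuous, each pairwise event is measurable, and so is $\mathcal G_M$. Proposition~\ref{prop:random_centered_admissibility_main} applies under Assumption~\ref{ass:random_centered_model} and $2\rho<\arccos(\tau)$. Its bound \eqref{eq:random_admissibility_via_centers_main} gives
\[
\Pr\bigl[\mathcal D_M\cap\mathcal A_M\bigl(1-(1-\eta)^2,1-(1-\eta)^2\bigr)\bigr]\ge\Pr[\mathcal G_M],
\]
so the whole task reduces to lower-bounding $\Pr[\mathcal G_M]$.

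The one point to confirm is that $\mathcal D_M$ is really covered, and this is where care is needed. Suppose $\mathcal G_M$ holds. Since $\psi_\tau(\rho)\ge\arccos(\tau)>0$ for $\tau<1$, the centers satisfy $U_i\ne U_j$ for $i\ne j$. As $U_i=u(C_i)$, this forces $C_i\ne C_j$, so $\mathcal G_M\subseteq\mathcal D_M$. If the proposition's consequence is read as already including this inclusion, nothing further is needed; otherwise this short argument supplies it.

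Next, take complements. $\mathcal G_M^c$ is the union over pairs $i<j$ of $\{\angle(U_i,U_j)<\psi_\tau(\rho)\}$. For each pair, $(U_i,U_j)$ are i.i.d.\ with law $Q$, because the $C_i$ are i.i.d.\ $P_C$ and $Q$ is the pushforward $P_C\circ u^{-1}$. Hence each such event has probability $q_Q(\psi_\tau(\rho))$ by Definition~\ref{def:center_collision_probability}. Boole's inequality then gives
\[
\Pr[\mathcal G_M]\ge1-\binom{M}{2}q_Q\bigl(\psi_\tau(\rho)\bigr),
\]
which is \eqref{eq:union_bound_admissibility_main}.

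Finally, if \eqref{eq:union_bound_capacity_condition_main} holds, the probability of $\mathcal D_M\cap\mathcal A_M$ is at least $1-\delta$. So $M$ belongs to the set in \eqref{eq:random_capacity_def_main}, and the supremum is at least $M$. No step is a real obstacle; the only subtlety is the distinctness inclusion handled above.
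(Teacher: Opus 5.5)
Your proposal is correct and follows essentially the same argument as the paper: it reduces to the center-separation event via Proposition~\ref{prop:random_centered_admissibility_main}, uses $\psi_\tau(\rho)>0$ and $U_i=u(C_i)$ to place that event inside $\mathcal D_M$, bounds its complement by Boole's inequality over the $\binom{M}{2}$ pairs (each pair having the same distribution as $(U_1,U_2)$, with $U_1,U_2$ i.i.d.\ $Q$), and then reads off the capacity bound from Definition~\ref{def:random_capacity_main}. Your remark on the measurability of the separation event is a small refinement that the paper leaves implicit.
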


\begin{proof}
A proof is provided in the Appendix.
\end{proof}

Theorem~\ref{thm:random_capacity_union_bound_main} converts the pairwise separation-failure probability into a sufficient condition for a random codebook of size $M$ to be admissible with probability at least $1-\delta$.

\subsection{Asymptotic Lower Bounds}

We first consider the case in which the induced center distribution is uniform on the sphere.

\begin{corollary}[Uniform induced center distribution]
\label{cor:uniform_center_distribution_main}
Suppose Assumption~\ref{ass:random_centered_model} holds, $2\rho<\arccos(\tau)$, and the induced center distribution $Q$ is uniform on $\mathbb S^{D-1}$. If $\psi_\tau(\rho)\in(0,\pi]$, then
\begin{equation}
q_Q\bigl(\psi_\tau(\rho)\bigr) = V_D\bigl(\psi_\tau(\rho)\bigr),
\label{eq:uniform_collision_probability_main}
\end{equation}
where $V_D$ is defined in \eqref{eq:cap_measure_main}. Consequently, for every integer $M\ge2$ and every $\delta\in(0,1)$, if
\begin{equation}
\binom{M}{2} V_D\bigl(\psi_\tau(\rho)\bigr) \le\delta,
\label{eq:uniform_capacity_condition_main}
\end{equation}
then
\begin{equation}
C_{D,\delta}^{\mathrm{rnd}} \Bigl( \tau,\, 1-(1-\eta)^2,\,
1-(1-\eta)^2;\, g,\phi,P_C \Bigr) \ge M.
\label{eq:uniform_random_capacity_lower_bound_main}
\end{equation}
\end{corollary}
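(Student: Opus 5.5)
The corollary follows from Theorem~\ref{thm:random_capacity_union_bound_main} once we identify $q_Q(\psi_\tau(\rho))$ with a normalized cap measure. The plan is to compute the pairwise separation-failure probability for uniform $Q$ by conditioning on one center and using rotational invariance. Then we substitute into the union-bound condition \eqref{eq:union_bound_capacity_condition_main}.

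\textbf{Computing the failure probability.} Let $U_1,U_2\stackrel{\mathrm{i.i.d.}}{\sim}Q$ with $Q=\sigma_{D-1}/\sigma_{D-1}(\mathbb S^{D-1})$, and set $\psi=\psi_\tau(\rho)\in(0,\pi]$. Fubini and independence give
\[
q_Q(\psi)=\int_{\mathbb S^{D-1}} Q\bigl(\{x:\angle(x,u)<\psi\}\bigr)\,Q(du).
\]
The set $\{x:\angle(x,u)<\psi\}$ is the cap $\mathrm{Cap}(u,\psi)$ with its boundary sphere $\{x:\angle(x,u)=\psi\}$ removed. This step contains the only subtlety, the strict-versus-nonstrict boundary. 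I would show that the boundary is $\sigma_{D-1}$-null, so that $Q(\{\angle(x,u)<\psi\})=V_D(\psi)$.
\begin{itemize}
\item For $\psi<\pi$, the boundary is a level set $\{\langle x,u\rangle=\cos\psi\}$, a $(D-2)$-dimensional sphere. It is null, which can be read off the integral form \eqref{eq:cap_measure_integral_main}: the $\theta$-integral puts no mass on a single value.
\item For $\psi=\pi$, the boundary is the single point $-u$, which is null since $D\ge2$.
\end{itemize}
By rotational invariance, $V_D(\psi)$ does not depend on $u$. The integrand is therefore the constant $V_D(\psi)$, and integrating against the probability measure $Q$ yields \eqref{eq:uniform_collision_probability_main}.

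\textbf{Capacity bound.} Substituting \eqref{eq:uniform_collision_probability_main} into Theorem~\ref{thm:random_capacity_union_bound_main}, condition \eqref{eq:uniform_capacity_condition_main} coincides with \eqref{eq:union_bound_capacity_condition_main}. The theorem then gives \eqref{eq:uniform_random_capacity_lower_bound_main} directly; its hypotheses, Assumption~\ref{ass:random_centered_model} and $2\rho<\arccos(\tau)$, are assumed in the corollary. Distinctness of the codes is already handled inside that theorem through Proposition~\ref{prop:random_centered_admissibility_main}, so no further argument is needed.
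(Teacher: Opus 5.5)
Your proposal is correct and follows the same route as the paper: condition on $U_1=u$, note that the boundary $\{x:\angle(x,u)=\psi\}$ is $\sigma_{D-1}$-null so the open-cap probability equals $V_D(\psi)$, use rotational invariance and average over $U_1$, then apply Theorem~\ref{thm:random_capacity_union_bound_main}. Your case split ($\psi<\pi$ versus $\psi=\pi$) only spells out the null-boundary step that the paper states in a single line.
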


\begin{proof}
A proof is provided in the Appendix.
\end{proof}

\begin{theorem}[Asymptotic random-code lower bound under uniform induced centers]
\label{thm:asymptotic_random_capacity_main}
Fix $\rho\in[0,\pi)$, $\eta\in[0,1)$, $\tau\in[0,1)$, and $\delta\in(0,1)$. Let $\{(g_D,\phi_D,P_{C,D}):D\in\mathcal D\}$ be a dimension-indexed family. Suppose that, for every sufficiently large $D\in\mathcal D$, Assumption~\ref{ass:random_centered_model} holds with parameters $(\rho,\eta)$, the induced center distribution $Q_D$ is uniform on $\mathbb S^{D-1}$, $2\rho<\arccos(\tau)$, and  $\psi_\tau(\rho)\in(0,\pi/2)$. Then
\begin{multline}
\liminf_{\substack{D\to\infty\\D\in\mathcal D}} \frac{1}{D}
\log C_{D,\delta}^{\mathrm{rnd}} \Bigl( \tau,1-(1-\eta)^2,1-(1-\eta)^2;\\ 
g_D,\phi_D,P_{C,D} \Bigr) \ge -\frac{1}{2} \log\bigl(\sin\psi_\tau(\rho)\bigr).
\label{eq:asymptotic_random_capacity_main}
\end{multline}
\end{theorem}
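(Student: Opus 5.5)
The plan is to apply Corollary~\ref{cor:uniform_center_distribution_main} with the largest $M$ allowed by the union-bound condition, and then read off the exponent from the cap-measure asymptotics. Write $\psi\triangleq\psi_\tau(\rho)\in(0,\pi/2)$ and $\varepsilon_\eta=1-(1-\eta)^2$. For every sufficiently large $D\in\mathcal D$, the hypotheses of Corollary~\ref{cor:uniform_center_distribution_main} hold. Hence every integer $M\ge2$ with $\binom{M}{2}V_D(\psi)\le\delta$ satisfies $C^{\mathrm{rnd}}_{D,\delta}(\tau,\varepsilon_\eta,\varepsilon_\eta;g_D,\phi_D,P_{C,D})\ge M$. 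Since $\binom{M}{2}\le M^2/2$, it suffices to take
\[
M_D\triangleq\Bigl\lfloor \sqrt{2\delta/V_D(\psi)}\Bigr\rfloor .
\]
This choice gives $M_D\ge2$ once $V_D(\psi)\le\delta/8$, which holds for large $D$ because $V_D(\psi)\to0$ for fixed $\psi<\pi/2$. Therefore $\log C^{\mathrm{rnd}}_{D,\delta}\ge \log M_D\ge \tfrac12\log(2\delta)-\tfrac12\log V_D(\psi)-\log 2$ for large $D$, using $\lfloor x\rfloor\ge x/2$ for $x\ge1$.

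Dividing by $D$ and taking $\liminf$, the constant terms vanish. The claim then reduces to the single fact
\[
\limsup_{D\to\infty}\frac1D\log V_D(\psi)\le \log\sin\psi ,\qquad \psi\in(0,\pi/2).
\]
This is precisely the fixed-angle cap asymptotic already used for Theorem~\ref{thm:asymptotic_lower_bound_main}. That theorem's bound $\liminf\frac1D\log A_D(\psi)\ge-\log\sin\psi$ was derived from Proposition~\ref{prop:covering_lower_bound_main}, i.e.\ from $A_D\ge1/V_D$, so I would reuse the same estimate. To keep the argument self-contained, I would verify it directly from \eqref{eq:cap_measure_integral_main}.

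For the numerator, $\sin\theta$ is increasing on $[0,\psi]\subset[0,\pi/2]$, so $\int_0^\psi\sin^{D-2}\theta\,d\theta\le \psi\sin^{D-2}\psi$. For the denominator, the Wallis-type integral $\int_0^\pi\sin^{D-2}\theta\,d\theta$ is at least $\int_{\pi/2-1/\sqrt D}^{\pi/2+1/\sqrt D}\sin^{D-2}\theta\,d\theta$. On that window, $\sin^{D-2}\theta\ge(\cos(1/\sqrt D))^{D}$, which is bounded below by a positive constant, so the denominator is at least of order $D^{-1/2}$. Combining the two bounds gives $\frac1D\log V_D(\psi)\le \frac{D-2}{D}\log\sin\psi+O(\frac{\log D}{D})$, which yields the limsup claim. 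The bound is then
\[
\liminf_{D\in\mathcal D}\frac1D\log C^{\mathrm{rnd}}_{D,\delta}\ge -\tfrac12\log\sin\psi .
\]

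The only delicate step is the cap asymptotic, and specifically the subexponential lower bound on the normalizing integral. Everything else is bookkeeping. That includes restricting to $D\in\mathcal D$ large enough that the hypotheses hold, and noting that the factor $\tfrac12$ arises from the quadratic growth of $\binom{M}{2}$ in the union bound. The $\delta$-dependence and the floor only contribute $O(1/D)$ terms.
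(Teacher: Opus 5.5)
Your proof is correct and follows essentially the same route as the paper: apply Corollary~\ref{cor:uniform_center_distribution_main} with a codebook size of order $V_D(\psi)^{-1/2}$, where $\psi=\psi_\tau(\rho)$, and read off the exponent from the fixed-angle cap-measure asymptotic. The differences are only bookkeeping. You choose $M_D=\lfloor\sqrt{2\delta/V_D(\psi)}\rfloor$ directly, whereas the paper uses $M_D=\lfloor\exp(D(-\tfrac12\log\sin\psi-\epsilon))\rfloor$ and then lets $\epsilon\downarrow 0$. You also prove from the integral formula the one-sided bound $\limsup_{D}\frac1D\log V_D(\psi)\le\log\sin\psi$, which is all the argument needs, instead of citing the two-sided limit.
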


\begin{proof}
A proof is provided in the Appendix.
\end{proof}

\section{Image-Level Evaluation}
\label{sec:image_level_evaluation}

We evaluate the finite-sample cardinality $\widehat C_{\mathcal S}$ defined in Section~\ref{subsec:finite_sample_evaluation} on released repeated-view DigiFace-1M \cite{bae2023digiface1m} images using three fixed recognizers. For each recognizer, the primary evaluation computes a maximum-cardinality subset of the fixed evaluation identity set satisfying all empirical genuine acceptance and pairwise impostor non-match requirements at the recognizer-specific verification threshold obtained from the specified calibration protocol. The DigiFace generator is neither retrained nor rerun, and no new images or identities are generated. Without additional sampling and latent-space coverage assumptions, this sample-restricted cardinality is neither an estimator nor a confidence bound for the fixed-code capacity $C_D$.

\subsection{Protocol and In-Domain Calibration}

DigiFace-1M is used because the released data provide 72 rendered views per generator-defined identity, permitting repeated-view evaluation of the empirical genuine and pairwise impostor quantities. Using released images fixes the candidate identity and image pool without rerunning the generator. From the P1 block~\cite{bae2023digiface1m}, a deterministic selection with seed 20260901 chooses 250 identities and 24 images per identity, yielding 6,000 selected source images. We do not claim that this selected subset is representative of DigiFace-1M.

Before recognizer scoring, we evaluate YuNet~\cite{wu2023yunet} detector score thresholds $0.90$, $0.85$, $0.80$, $0.75$, and $0.70$. We use $0.80$, the largest evaluated threshold for which every selected identity retains at least 12 of its 24 views. The common YuNet detection and five-landmark $112\times112$ alignment pipeline~\cite{wu2023yunet,deng2019arcface} retains 5,868 images from all 250 identities, with 13--24 retained views per identity. The same 5,868 aligned crops are supplied to OpenCV SFace~\cite{zhong2021sface}, a TFace IR-50 checkpoint, and the ADMIS Syn-30k IR-50/ArcFace checkpoint~\cite{deng2019arcface,deandres2024frcsyn}.

For the primary evaluation, the 250 identities are split once at the identity level, with 50 identities used to calibrate the verification threshold and the remaining 200 identities used only for evaluation. The same seed-20260901 split is applied to all recognizers. The calibration and evaluation sets contain disjoint identities and source images. A separate verification threshold is calibrated for each recognizer to a target empirical calibration FMR of $10^{-3}$ on the 50 calibration identities. We set $\varepsilon_{\rm in}=\varepsilon_{\rm out}=0.05$ for the operational graph, while $\eta=0.05$ is used only for the separate empirical geometric check based on empirical mean directions and radii. The TFace/ADMIS comparison is descriptive across two fixed recognition pipelines and does not isolate the effect of training-data source, because the public checkpoints are not controlled training runs differing only in training data.

\begin{figure*}[t]
\centering
\includegraphics[width=0.85\textwidth]{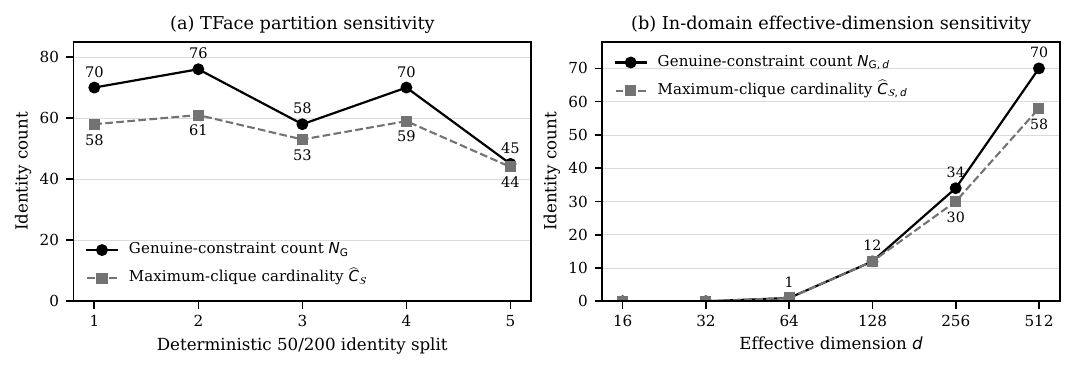}
\vspace{-4pt}
\caption{DigiFace in-domain sensitivity analyses.}
\label{fig:image_level_evaluation}
\vspace{-6pt}
\end{figure*}

\subsection{Cross-Recognizer Results at a Common Calibration Target}

\begin{table}[!b]
\centering
\caption{DigiFace in-domain cross-recognizer evaluation.}
\label{tab:real_pipeline_summary}
\footnotesize
\setlength{\tabcolsep}{3.8pt}
\begin{tabular}{@{}lcccccc@{}}
\toprule
Recognizer
& $D$
& $\tau$
& $\alpha$
& $N_{\rm eval}$
& $N_{\rm G}$
& $\widehat C_{\mathcal S}$ \\
\midrule
ADMIS Syn-30k IR-50
& 512 & 0.4964 & $10^{-3}$ & 200 & 9 & 9 \\
SFace
& 128 & 0.5227 & $10^{-3}$ & 200 & 2 & 2 \\
TFace IR-50
& 512 & 0.3985 & $10^{-3}$ & 200 & 70 & 58 \\
\bottomrule
\end{tabular}
\end{table}

Table~\ref{tab:real_pipeline_summary} lists the primary in-domain cross-recognizer results. The parameter $\alpha$ is the target empirical calibration FMR, $N_{\rm eval}$ is the number of evaluation identities, $N_{\rm G}$ is the number satisfying the empirical genuine constraint, and $\widehat C_{\mathcal S}$ is the exact maximum-clique cardinality of the corresponding compatibility graph. The threshold $\tau$ is calibrated separately for each recognizer.

At the primary in-domain calibration target, TFace has $N_{\rm G}=70$ and exact maximum-clique cardinality 58. Thus, a maximum compatible subset contains $58/70\approx82.9\%$ of the TFace identities satisfying the empirical genuine constraint. For SFace and ADMIS Syn-30k, $N_{\rm G}$ equals 2 and 9, respectively, and in each case these identities form a complete compatibility graph. Consequently, for these two recognizers, $\widehat C_{\mathcal S}=N_{\rm G}$ on the evaluated split. With $\eta=0.05$, no unordered pair of the 200 evaluation identities satisfies the finite-sample geometric check for any of the three recognizers. Failure of this finite-sample geometric check does not imply failure of the empirical admissibility constraints.

To assess sensitivity to the identity partition, we repeat the 50/200 protocol with seeds 20260901--20260905. The corresponding $N_{\rm G}$ values are 70, 76, 58, 70, and 45, and the exact maximum-clique cardinalities are 58, 61, 53, 59, and 44. The ratios $\widehat C_{\mathcal S}/N_{\rm G}$ therefore range from 80.3\% to 97.8\% across the five evaluated splits. Figure~\ref{fig:image_level_evaluation}(a) shows these split-specific values.

\subsection{Operating-Point and Tolerance Sensitivity}

On the primary TFace split, target empirical calibration FMRs $10^{-3}$, $2\times10^{-3}$, $5\times10^{-3}$, and $10^{-2}$ give $N_{\rm G}$ values of 70, 83, 103, and 135, respectively. The exact maximum-clique cardinalities are 58 and 62 at the first two operating points. At the latter two operating points, the certified lower bounds on the maximum-clique cardinality are 54 and 55, respectively. Because the latter two values are lower bounds, these results do not establish monotonicity of the exact sample-restricted cardinality with respect to the calibration target.

At the primary TFace threshold, fixing $\varepsilon_{\rm in}=0.05$ and setting $\varepsilon_{\rm out}$ to $0.01$, $0.05$, and $0.10$ gives exact maximum-clique cardinalities of 43, 58, and 64, respectively; $N_{\rm G}=70$ at all three values. Fixing $\varepsilon_{\rm out}=0.05$ and setting $\varepsilon_{\rm in}$ to $0.02$, $0.05$, and $0.10$ gives $(N_{\rm G},\widehat C_{\mathcal S})=(34,31)$ and $(70,58)$ at the first two values, respectively, while the last value gives $N_{\rm G}=103$ and $\widehat C_{\mathcal S}\ge82$. The sample-restricted cardinality therefore varies across both tolerance sweeps. These analyses do not specify application-independent choices of $\alpha$, $\varepsilon_{\rm in}$, or $\varepsilon_{\rm out}$.

A separate threshold-transfer check calibrates each recognizer on LFW~\cite{huang2008labeled} at target empirical FMR $10^{-3}$ and applies the resulting threshold to DigiFace without recalibration. The resulting aggregate complete-pair empirical DigiFace impostor FMRs are approximately $7.33\%$, $2.55\%$, and $9.06\%$ for SFace, ADMIS Syn-30k IR-50, and TFace IR-50, respectively. The LFW-calibrated thresholds therefore yield empirical DigiFace FMRs above the target value of $10^{-3}$ on the selected 250-identity sample and are not interpreted as in-domain DigiFace operating points at empirical FMR $10^{-3}$.

\subsection{Effective-Dimension Sensitivity}

For the 512-dimensional TFace representation, a single orthonormal basis generated with seed 20260901 defines nested subspaces of dimensions $d\in\{16,32,64,128,256,512\}$, where the $d$-dimensional subspace is spanned by the first $d$ basis vectors. Embeddings from the same 50 calibration identities and 200 evaluation identities are projected onto each subspace, renormalized to unit Euclidean norm, and recalibrated separately to a target empirical calibration FMR of $10^{-3}$ at each $d$. Table~\ref{tab:effective_dimension_sensitivity} lists the resulting finite-sample quantities.

\begin{table}[!b]
\centering
\caption{TFace effective-dimension sensitivity.}
\label{tab:effective_dimension_sensitivity}
\footnotesize
\begin{tabular}{ccccc}
\toprule
$d$
& $\tau_d$
& $\alpha$
& $N_{{\rm G},d}$
& $\widehat C_{\mathcal S,d}$ \\
\midrule
16  & 0.7776 & $10^{-3}$ & 0  & 0  \\
32  & 0.6417 & $10^{-3}$ & 0  & 0  \\
64  & 0.5445 & $10^{-3}$ & 1  & 1  \\
128 & 0.4838 & $10^{-3}$ & 12 & 12 \\
256 & 0.4338 & $10^{-3}$ & 34 & 30 \\
512 & 0.3985 & $10^{-3}$ & 70 & 58 \\
\bottomrule
\end{tabular}
\end{table}

The parameter $\tau_d$ is the verification threshold calibrated after projection to dimension $d$, $N_{{\rm G},d}$ is the number of evaluation identities satisfying the empirical genuine constraint, and $\widehat C_{\mathcal S,d}$ is the exact maximum-clique cardinality of the corresponding compatibility graph. The target empirical calibration FMR is $\alpha=10^{-3}$ at every $d$.

The exact maximum-clique cardinalities are 0, 0, 1, 12, 30, and 58 for $d=16,32,64,128,256,512$, respectively, and the corresponding $N_{{\rm G},d}$ values are 0, 0, 1, 12, 34, and 70. Along the decreasing dimension sequence $d=512,256,128,64,32,16$, both quantities are nonincreasing. The pairs $(N_{{\rm G},d},\widehat C_{\mathcal S,d})$ are $(70,58)$, $(34,30)$, $(12,12)$, $(1,1)$, $(0,0)$, and $(0,0)$, respectively. All maximum-clique cardinalities in this analysis are exact. Figure~\ref{fig:image_level_evaluation}(b) shows the same dimension-dependent quantities.

This projection experiment measures sensitivity of one fixed 512-dimensional recognizer to dimensional compression followed by renormalization and threshold recalibration. It does not realize a family of independently specified pipelines with increasing native embedding dimension and therefore does not validate the dimension-indexed asymptotic results.

\section{Conclusion}
\label{sec:conclusion_main}

The number of nominal synthetic identities does not determine how many are jointly distinguishable under a specified verification rule. For deterministic view-invariant pipelines, fixed-code capacity is independent
of $\varepsilon_{\rm in},\varepsilon_{\rm out}\in[0,1)$ and equals the spherical-code cardinality over the realizable embedding set. Under centered stochastic variation, the analysis gives a sufficient
center-separation condition. Under full $(\rho,\eta)$-angular expressivity, it yields finite-dimensional and asymptotic capacity lower bounds rather than an exact characterization in general. The finite-sample maximum-clique cardinality is a sample-restricted joint-admissibility quantity and is not a population-capacity estimate without additional sampling and latent-space coverage assumptions. The theoretical latent-code supremum and the sample-restricted cardinality are therefore distinct quantities.

\vspace{-5pt}

\section*{Acknowledgment}

This work was supported by the Swiss National Science Foundation (SNSF) under Grant No.~222339.

 

\vspace{-5pt}

\putbib
\end{bibunit}

\clearpage
\appendices


\begin{bibunit}

\renewcommand{\thesubsection}{\thesection.\arabic{subsection}}
\renewcommand{\thesubsectiondis}{\thesection.\arabic{subsection}.}

\section{Detailed Proofs of the Main Results}
\label{sec:omitted-proofs}

\subsection{Proof of Proposition~\ref{prop:random_vs_deterministic_capacity}}
\label{app:proof_prop_random_vs_deterministic_capacity}

\begin{proof}
Suppose that
\begin{equation}
\Pr\bigl[\mathcal D_M\cap\mathcal A_M(\varepsilon_{\rm in},\varepsilon_{\rm out})\bigr] \ge 1-\delta.
\end{equation}
Since $\delta<1$, this probability is strictly positive. Hence there exists at least one realization $(c_1,\ldots,c_M)\in\mathcal Z^M$ for which the codes are pairwise distinct and the induced family $\{P_{c_i}\}_{i=1}^M$ is $(\tau,\varepsilon_{\rm in},\varepsilon_{\rm out})$-admissible. By Definition~\ref{def:operational_capacity_main}, this implies
\begin{equation}
M\le C_D(\tau,\varepsilon_{\rm in},\varepsilon_{\rm out};g,\phi).
\end{equation}
Taking the supremum over all such $M$ proves \eqref{eq:random_vs_deterministic_capacity}.
\end{proof}

\subsection{Proof of Proposition~\ref{prop:deterministic_capacity_main}}
\label{app:proof_prop_deterministic_capacity}

\begin{proof}
Fix $\varepsilon_{\rm in},\varepsilon_{\rm out}\in[0,1)$. Consider a pairwise-distinct codebook $(c_1,\ldots,c_M)$ and set $u_i\triangleq v(c_i)$. Under the view-invariance assumption, $E_i^{(k)}=u_i$ a.s. for every $i$ and $k$. Since $\tau<1$,
\begin{equation}
\Pr \left[ \left\langle E_i^{(1)},E_i^{(2)}\right\rangle>\tau
\right] =1, \qquad i\in[M].
\end{equation}
Hence the genuine constraint holds for every $\varepsilon_{\rm in}\in[0,1)$. For $i\ne j$,
\begin{equation}
\Pr \left[ \left\langle E_i^{(1)},E_j^{(1)}\right\rangle\le\tau \right]
= \mathbf 1 \left\{\langle u_i,u_j\rangle\le\tau\right\}.
\end{equation}
Because $\varepsilon_{\rm out}<1$, one has $1-\varepsilon_{\rm out}>0$. Therefore the impostor constraint
\[
\Pr \left[ \left\langle E_i^{(1)},E_j^{(1)}\right\rangle\le\tau \right]
\ge 1-\varepsilon_{\rm out}
\]
holds if and only if $\langle u_i,u_j\rangle\le\tau$, or equivalently,
\[
\angle(u_i,u_j)\ge\arccos(\tau)=\psi_\tau.
\]
Thus every admissible codebook induces a subset of $\mathcal V_{g,\phi}$ with minimum pairwise angle at least $\psi_\tau$. Conversely, let $u_1,\ldots,u_M\in\mathcal V_{g,\phi}$ satisfy $\angle(u_i,u_j)\ge\psi_\tau$ for all $i\ne j$. For each $i$, choose $c_i\in\mathcal Z$ realizing $u_i$. Since $\psi_\tau>0$, the directions $u_1,\ldots,u_M$ are pairwise distinct; consequently, the realizing codes $c_1,\ldots,c_M$ are also pairwise distinct. The resulting codebook satisfies both admissibility constraints. Hence
\begin{equation}
C_D(\tau,\varepsilon_{\rm in},\varepsilon_{\rm out};g,\phi) = A(\mathcal V_{g,\phi},\psi_\tau).
\end{equation}
Since $\mathcal V_{g,\phi}\subseteq\mathbb S^{D-1}$,
\[
A(\mathcal V_{g,\phi},\psi_\tau) \le A_D(\psi_\tau).
\]
If every direction in $\mathbb S^{D-1}$ is deterministically realizable, then $\mathcal V_{g,\phi}=\mathbb S^{D-1}$, and the final equality follows.
\end{proof}

\subsection{Proof of Theorem~\ref{thm:sufficient_admissibility_main}}
\label{app:proof_thm_sufficient_admissibility}

\begin{proof}
Fix $i$ and let
\begin{equation}
A_i^{(k)}\triangleq \{E_i^{(k)}\in \mathrm{Cap}(u_i,\rho)\}.
\end{equation}
By Definition~\ref{def:centered_identity_distribution_main}, $\Pr(A_i^{(k)})\ge 1-\eta$. On the event $A_i^{(1)}\cap A_i^{(2)}$, the spherical triangle inequality yields
\begin{align}
\angle(E_i^{(1)},E_i^{(2)})
& \le \angle(E_i^{(1)},u_i)+\angle(u_i,E_i^{(2)})\\
& \le 2\rho < \arccos(\tau).    
\end{align}
Hence $\langle E_i^{(1)},E_i^{(2)}\rangle > \tau$ whenever $A_i^{(1)}\cap A_i^{(2)}$ occurs. Therefore
\begin{equation}
\Pr \left[\langle E_i^{(1)},E_i^{(2)}\rangle>\tau\right]
\ge \Pr(A_i^{(1)}\cap A_i^{(2)})
\ge (1-\eta)^2.
\end{equation}
Now fix $i\neq j$. On the event $A_i^{(1)}\cap A_j^{(1)}$, one has
\begin{align}
\angle(E_i^{(1)},E_j^{(1)}) & \ge 
\angle(u_i,u_j)-\angle(E_i^{(1)},u_i)-\angle(E_j^{(1)},u_j) \nonumber\\
& \ge  \angle(u_i,u_j)-2\rho \nonumber\\
& \ge  \arccos(\tau),    
\end{align}
where the last step uses \eqref{eq:inter_condition_geometry_main}. Hence $\langle E_i^{(1)},E_j^{(1)}\rangle \le \tau$ whenever $A_i^{(1)}\cap A_j^{(1)}$ occurs, and therefore
\begin{equation}
\Pr \left[\langle E_i^{(1)},E_j^{(1)}\rangle\le\tau\right] \ge \Pr(A_i^{(1)}\cap A_j^{(1)})
\ge (1-\eta)^2.
\end{equation}
This proves the claim.
\end{proof}

\subsection{Proof of Proposition~\ref{prop:cap_volume_converse_main}}
\label{app:proof_prop_cap_volume_converse}

\begin{proof}
Let $\{u_1,\ldots,u_M\}\subset\mathbb{S}^{D-1}$ be any spherical code with minimum pairwise angle at least $\psi$. Then the interiors of the caps $\mathrm{Cap}(u_i,\psi/2)$, $i\in[M]$, are pairwise disjoint. Indeed, if two cap interiors intersected, then there would exist a point whose angular distance to both centers is strictly smaller than $\psi/2$. By the triangle inequality for geodesic distance on the sphere, the corresponding centers would then be at angular distance strictly smaller than $\psi$, contradicting the minimum-angle condition. Any intersections of the closed caps can therefore occur only on their boundaries, and hence have surface measure zero.
Therefore,
\begin{equation}
\sum_{i=1}^M \sigma_{D-1}\bigl(\mathrm{Cap}(u_i,\psi/2)\bigr)
\le \sigma_{D-1}(\mathbb{S}^{D-1}).
\end{equation}
By rotational invariance, all caps of radius $\psi/2$ have the same surface measure, so
\begin{equation}
M\,\sigma_{D-1}\bigl(\mathrm{Cap}(u,\psi/2)\bigr)
\le \sigma_{D-1}(\mathbb{S}^{D-1}).
\end{equation}
Dividing by $\sigma_{D-1}(\mathbb{S}^{D-1})$ yields
\begin{equation}
M\le \frac{1}{V_D(\psi/2)}.
\end{equation}
Since this holds for every spherical code with minimum angle at least $\psi$, it also holds for $A_D(\psi)$.
\end{proof}

\subsection{Proof of Theorem~\ref{thm:achievability_main}}
\label{app:proof_thm_achievability}

\begin{proof}
Let $\{u_1,\ldots,u_M\}\subset\mathbb{S}^{D-1}$ be a spherical code of size $M= A_D\bigl(\psi_\tau(\rho)\bigr)$ with minimum pairwise angle at least $\psi_\tau(\rho)$. Because $\psi_\tau(\rho)>0$, these directions are pairwise distinct. By full $(\rho,\eta)$-angular expressivity, there exist pairwise-distinct latent codes $c_1,\ldots,c_M$ whose induced distributions $P_i$ are $(\rho,\eta)$-centered at the corresponding $u_i$. Theorem~\ref{thm:sufficient_admissibility_main} then implies that the family $\{P_i\}_{i=1}^M$ is
\[
\bigl(\tau,1-(1-\eta)^2,1-(1-\eta)^2\bigr)\text{-admissible}.
\]
Hence the code size $M$ is achievable, which proves \eqref{eq:achievability_main}.
\end{proof}

\subsection{Proof of Proposition~\ref{prop:covering_lower_bound_main}}
\label{app:proof_prop_covering_lower_bound}

\begin{proof}
Let $\{u_1,\ldots,u_M\}$ be a spherical code of maximum cardinality $M=A_D(\psi)$. Such a code is maximal under inclusion. Hence the caps $\mathrm{Cap}(u_i,\psi)$ must cover the sphere; otherwise one could add another point without violating the minimum-angle constraint. Therefore
\[
\sigma_{D-1}(\mathbb{S}^{D-1}) \le M\,\sigma_{D-1}(\mathrm{Cap}(u,\psi)),
\]
which implies \eqref{eq:covering_lower_bound_main}.
\end{proof}

\subsection{Proof of Theorem~\ref{thm:asymptotic_lower_bound_main}}
\label{app:proof_thm_asymptotic_lower_bound}

\begin{proof}
By Proposition~\ref{prop:covering_lower_bound_main}, $A_D(\psi)\ge \frac{1}{V_D(\psi)}$. For every fixed $\alpha\in(0,\pi/2)$, the normalized spherical-cap measure satisfies
\begin{equation}
\lim_{D\to\infty}\frac{1}{D}\log V_D(\alpha)=\log(\sin\alpha).
\end{equation}
Equivalently,
\begin{equation}
V_D(\alpha) = \exp \left( -D\,[-\log(\sin\alpha)] + o(D) \right).
\end{equation}
Substituting $\alpha=\psi$ yields \eqref{eq:spherical_code_rate_lower_main}. Applying Theorem~\ref{thm:achievability_main} separately to each sufficiently large member $(g_D,\phi_D)$ of the dimension-indexed family and taking the liminf gives \eqref{eq:restricted_capacity_rate_lower_main}.
\end{proof}

\subsection{Proof of Corollary~\ref{cor:deterministic_asymptotic_main}}
\label{app:proof_cor_deterministic_asymptotic}

\begin{proof}
For every sufficiently large $D\in\mathcal D$, Proposition~\ref{prop:deterministic_capacity_main} gives
\[
C_D(\tau,\varepsilon_{\rm in},\varepsilon_{\rm out};g_D,\phi_D) = A_D(\arccos(\tau)).
\]
The equality in \eqref{eq:deterministic_asymptotic_rate_main} therefore follows from the definition in \eqref{eq:rate_definition_main}. Since $\arccos(\tau)\in(0,\pi/2)$ for $\tau\in(0,1)$, \eqref{eq:spherical_code_rate_lower_main} gives
\[
\liminf_{D\to\infty} \frac{1}{D}\log A_D(\arccos(\tau))
\ge -\log \bigl(\sin(\arccos(\tau))\bigr).
\]
Finally, $\sin(\arccos(\tau))=\sqrt{1-\tau^2}$, which yields the stated bound.
\end{proof}

\subsection{Proof of Proposition~\ref{prop:random_centered_admissibility_main}}
\label{app:proof_prop_random_centered_admissibility}

\begin{proof}
By Assumption~\ref{ass:random_centered_model}, each random distribution $P_{C_i}$ is $(\rho,\eta)$-centered at $U_i$. Together with \eqref{eq:random_intra_condition} and \eqref{eq:random_pairwise_center_condition}, the hypotheses of Theorem~\ref{thm:sufficient_admissibility_main} are satisfied. Hence the family $\{P_{C_i}\}_{i=1}^M$ is
\[
\bigl(\tau,\;1-(1-\eta)^2,\;1-(1-\eta)^2\bigr)\text{-admissible}.
\]
This proves the first claim. Moreover, on the center-separation event in \eqref{eq:random_pairwise_center_condition}, one has $U_i\ne U_j$ for every $i\ne j$ because $\psi_\tau(\rho)>0$. Since $U_i=u(C_i)$ is a deterministic function of the latent code, $C_i=C_j$ would imply $U_i=U_j$. Hence the center-separation event is contained in $\mathcal D_M$ as well as in the admissibility event, which proves \eqref{eq:random_admissibility_via_centers_main}.
\end{proof}

\subsection{Proof of Theorem~\ref{thm:random_capacity_union_bound_main}}
\label{app:proof_thm_random_capacity_union_bound}

\begin{proof}
Let
\begin{equation}
\mathcal E_{ij}\triangleq \{\angle(U_i,U_j)<\psi_\tau(\rho)\}, \qquad 1\le i<j\le M.
\end{equation}
Then
\begin{equation}
\Bigl\{ \angle(U_i,U_j)\ge \psi_\tau(\rho),\ \forall i\neq j \Bigr\}
= \Bigl( \bigcup_{1\le i<j\le M}\mathcal E_{ij} \Bigr)^c.
\end{equation}
By the union bound,
\begin{equation}
\Pr \Bigl[ \bigcup_{1\le i<j\le M}\mathcal E_{ij} \Bigr]
\le \sum_{1\le i<j\le M}\Pr(\mathcal E_{ij})
= \binom{M}{2}\,q_Q \bigl(\psi_\tau(\rho)\bigr),
\end{equation}
because each pair $(U_i,U_j)$ has the same distribution as $(U_1,U_2)$. Therefore,
\begin{equation}
\Pr \Bigl[ \angle(U_i,U_j)\ge \psi_\tau(\rho),\ \forall i\neq j \Bigr]
\ge 1-\binom{M}{2}\,q_Q \bigl(\psi_\tau(\rho)\bigr).
\end{equation}
On the displayed separation event, $U_i\ne U_j$ for every $i\ne j$ because $\psi_\tau(\rho)>0$. Since equal latent codes induce the same center, the sampled codes are therefore pairwise distinct on this event. Combining this with Proposition~\ref{prop:random_centered_admissibility_main} yields \eqref{eq:union_bound_admissibility_main}. The lower bound \eqref{eq:random_capacity_lower_bound_main} then follows from Definition~\ref{def:random_capacity_main}.
\end{proof}

\subsection{Proof of Corollary~\ref{cor:uniform_center_distribution_main}}
\label{app:proof_cor_uniform_center_distribution}

\begin{proof}
Fix $U_1=u$. Under the uniform distribution on $\mathbb S^{D-1}$, the open and closed caps of angular radius $\psi_\tau(\rho)$ have the same normalized surface measure because their boundary has zero surface measure. Hence the probability that $U_2$ satisfies $\angle(u,U_2)<\psi_\tau(\rho)$ equals $V_D(\psi_\tau(\rho))$. By rotational invariance, this probability does not depend on $u$. Averaging over $U_1$ proves \eqref{eq:uniform_collision_probability_main}. The capacity bound then follows from Theorem~\ref{thm:random_capacity_union_bound_main}.
\end{proof}

\subsection{Proof of Theorem~\ref{thm:asymptotic_random_capacity_main}}
\label{app:proof_thm_asymptotic_random_capacity}

\begin{proof}
Let $\Lambda(\psi)\triangleq -\log(\sin\psi)$, $\psi\in(0,\pi/2)$. Fix $\epsilon\in\bigl(0,\Lambda(\psi_\tau(\rho))/2\bigr)$ and define
\begin{equation}
M_D \triangleq \left\lfloor \exp \left( D\Bigl(\frac{\Lambda(\psi_\tau(\rho))}{2}-\epsilon\Bigr)
\right) \right\rfloor.
\end{equation}
By Corollary~\ref{cor:uniform_center_distribution_main}, it suffices to verify that
\begin{equation}
\binom{M_D}{2}\,V_D \bigl(\psi_\tau(\rho)\bigr)\le \delta
\end{equation}
for all sufficiently large $D\in\mathcal D$. Using the fixed-angle asymptotic
\begin{equation}
V_D(\alpha) = \exp \bigl( -D\,\Lambda(\alpha)+o(D) \bigr), \qquad \alpha\in(0,\pi/2),
\end{equation}
we obtain
\begin{subequations}
\begin{align}
\log \binom{M_D}{2} & = 2D\Bigl(\frac{\Lambda(\psi_\tau(\rho))}{2}-\epsilon\Bigr)+o(D) \\
& = D\Lambda(\psi_\tau(\rho))-2\epsilon D+o(D).    
\end{align}    
\end{subequations}
Hence
\begin{equation}
\log \left( \binom{M_D}{2}\,V_D(\psi_\tau(\rho)) \right)
= -2\epsilon D+o(D),
\end{equation}
which tends to $-\infty$ as $D\to\infty$. Therefore,
\begin{equation}
\binom{M_D}{2}\,V_D(\psi_\tau(\rho))\to 0.
\end{equation}
In particular, \eqref{eq:uniform_capacity_condition_main} holds for all sufficiently large $D\in\mathcal D$. Thus
\begin{equation}
C_{D,\delta}^{\mathrm{rnd}} \Bigl( \tau,\,
1-(1-\eta)^2,\, 1-(1-\eta)^2;\, g_D,\phi_D,P_{C,D}
\Bigr) \ge M_D,
\end{equation}
for all sufficiently large $D\in\mathcal D$. Taking logarithms, dividing by $D$, and letting $\epsilon\downarrow 0$ yields \eqref{eq:asymptotic_random_capacity_main}.
\end{proof}

\vspace{10pt}

\section{Relation to Maximum Manifold Capacity Representations}
\label{sec:mmcr_surrogates}

Maximum Manifold Capacity Representations (MMCR) studies normalized multi-view embeddings through mean representations and a nuclear-norm objective~\cite{yerxa2023learning,isik2023information,
schaeffer2024towards}. For comparison, define the identity-conditional mean embedding $\mu_i\triangleq\mathbb E[E_i^{(1)}]\in\mathbb R^D$, $i\in[M]$, and the identity-mean matrix
\begin{equation}
\mathbf B\triangleq
\begin{bmatrix}
\mu_1^\top\\
\vdots\\
\mu_M^\top
\end{bmatrix}
\in\mathbb R^{M\times D}.
\label{eq:mean_embedding_matrix_main}
\end{equation}
For this comparison, the centeredness condition constrains within-identity embeddings through angular-cap concentration, whereas $\|\mathbf B\|_*$ is a spectral functional of the identity-mean matrix. Neither quantity is equivalent to the operational admissibility constraints in Definition~\ref{def:operational_capacity_main}.

\begin{proposition}[Nuclear-norm bound for the identity-mean matrix]
\label{prop:nuclear_norm_main}
Let $r=\operatorname{rank}(\mathbf B)$. Since $\|E_i^{(1)}\|_2=1$ almost surely, $\|\mu_i\|_2\le1$ for every $i\in[M]$. Then
\begin{equation}
\|\mathbf{B}\|_* \le \sqrt{r}\,\|\mathbf{B}\|_F
\le \sqrt{rM} \le
\begin{cases}
M, & M\le D,\\[3pt]
\sqrt{MD}, & M>D.
\end{cases}
\label{eq:nuclear_norm_bound_main}
\end{equation}
\end{proposition}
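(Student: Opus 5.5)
The plan is to prove the chain of inequalities in \eqref{eq:nuclear_norm_bound_main} one link at a time, using only standard singular-value facts. First I justify $\|\mu_i\|_2\le1$. By Jensen's inequality applied to the convex function $\|\cdot\|_2$, $\|\mu_i\|_2=\|\mathbb E[E_i^{(1)}]\|_2\le\mathbb E\|E_i^{(1)}\|_2=1$. The expectation is well defined because $E_i^{(1)}$ is bounded.

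For the first inequality, let $\sigma_1,\ldots,\sigma_r>0$ be the nonzero singular values of $\mathbf B$. Then $\|\mathbf B\|_*=\sum_{k=1}^r\sigma_k$ and $\|\mathbf B\|_F=(\sum_{k=1}^r\sigma_k^2)^{1/2}$. Cauchy--Schwarz applied to the vectors $(1,\ldots,1)$ and $(\sigma_1,\ldots,\sigma_r)$ in $\mathbb R^r$ gives $\|\mathbf B\|_*\le\sqrt r\,\|\mathbf B\|_F$. If $r=0$, every term is zero and the bound is trivial.

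For the second inequality, I compute the Frobenius norm row by row: $\|\mathbf B\|_F^2=\sum_{i=1}^M\|\mu_i\|_2^2\le M$. Hence $\sqrt r\,\|\mathbf B\|_F\le\sqrt{rM}$.

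For the final case split, I use $r\le\min(M,D)$, since $\mathbf B\in\mathbb R^{M\times D}$.
\begin{itemize}
\item If $M\le D$, then $r\le M$, so $\sqrt{rM}\le M$.
\item If $M>D$, then $r\le D$, so $\sqrt{rM}\le\sqrt{MD}$.
\end{itemize}

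There is no real obstacle here. The only point needing care is the Jensen step, and the elementary estimate $\|\mathbb E X\|\le\mathbb E\|X\|$ for bounded vector-valued random variables suffices for it.
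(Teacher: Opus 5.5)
Your proof is correct and matches the paper's argument essentially step for step: Cauchy--Schwarz on the nonzero singular values, the row-wise identity $\|\mathbf B\|_F^2=\sum_i\|\mu_i\|_2^2\le M$, and $r\le\min\{M,D\}$ for the case split. Your Jensen justification of $\|\mu_i\|_2\le 1$ and the remark on $r=0$ are small additions that the paper leaves implicit.
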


Proposition~\ref{prop:nuclear_norm_main} bounds the nuclear norm through both the matrix rank and the Frobenius norm, with $\|\mathbf B\|_F^2=\sum_i\|\mu_i\|_2^2$. This spectral quantity does not characterize the operational capacity defined by \eqref{eq:admissible_intra_main}--\eqref{eq:admissible_inter_main}.

\begin{proof}
Let $\sigma_1,\ldots,\sigma_r$ denote the nonzero singular values of $\mathbf{B}$. By Cauchy--Schwarz,
\begin{equation}
\|\mathbf{B}\|_* = \sum_{\ell=1}^r \sigma_\ell
\le \sqrt{r} \left( \sum_{\ell=1}^r \sigma_\ell^2 \right)^{1/2}
= \sqrt{r}\,\|\mathbf{B}\|_F.
\end{equation}
Moreover, $\|\mathbf{B}\|_F^2 = \sum_{i=1}^M \|\mu_i\|_2^2 \le M$. Hence $\|\mathbf{B}\|_* \le \sqrt{rM}$. Since $r\le \min\{M,D\}$, the stated bound follows.
\end{proof}

The centeredness condition also yields a lower bound on the norm of the corresponding identity mean embedding.

\begin{proposition}[Mean-norm lower bound under centeredness]
\label{prop:concentration_mean_norm}
Let $P_i$ be a $(\rho,\eta)$-centered identity-conditional embedding distribution centered at $u_i\in\mathbb{S}^{D-1}$, and let $\mu_i \triangleq \mathbb{E}[E_i^{(1)}]$. Then
\begin{equation}
\langle \mu_i,u_i\rangle \ge (1-\eta)\cos\rho - \eta.
\label{eq:mean_center_alignment}
\end{equation}
Consequently,
\begin{equation}
\|\mu_i\|_2 \ge \bigl[(1-\eta)\cos\rho - \eta\bigr]_+,
\label{eq:mean_norm_lower_bound}
\end{equation}
where $[x]_+\triangleq\max\{x,0\}$.
\end{proposition}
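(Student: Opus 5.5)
The plan is to write $\langle\mu_i,u_i\rangle$ as an expectation and split it on the cap event. By linearity of expectation (the integrand is bounded, so everything is integrable),
\[
\langle \mu_i,u_i\rangle=\mathbb E\bigl[\langle E_i^{(1)},u_i\rangle\bigr].
\]
Let $A\triangleq\{E_i^{(1)}\in\mathrm{Cap}(u_i,\rho)\}$, so that $p\triangleq\Pr(A)\ge1-\eta$ by Definition~\ref{def:centered_identity_distribution_main}. Split the expectation into its parts on $A$ and on $A^c$.

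\emph{Bounds on each part.} On $A$, the angle to $u_i$ is at most $\rho\le\pi$. Since $\cos$ is decreasing on $[0,\pi]$, this gives $\langle E_i^{(1)},u_i\rangle\ge\cos\rho$. On $A^c$, Cauchy--Schwarz for unit vectors gives $\langle E_i^{(1)},u_i\rangle\ge-1$. Hence
\[
\langle\mu_i,u_i\rangle\ge p\cos\rho-(1-p).
\]

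\emph{Passing from $p$ to $\eta$.} The right-hand side is affine in $p$ with slope $\cos\rho+1\ge0$, so it is nondecreasing in $p$. Substituting the worst case $p=1-\eta$ therefore gives $(1-\eta)\cos\rho-\eta$, which is \eqref{eq:mean_center_alignment}. This monotonicity is the only point that needs any care: it holds even when $\cos\rho<0$, because $\cos\rho\ge-1$.

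\emph{Norm bound.} Since $\|u_i\|_2=1$, Cauchy--Schwarz gives $\|\mu_i\|_2\ge\langle\mu_i,u_i\rangle$. Combining this with $\|\mu_i\|_2\ge0$ yields the positive-part bound \eqref{eq:mean_norm_lower_bound}.
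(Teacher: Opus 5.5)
Your proposal is correct and follows essentially the same argument as the paper: split $\mathbb E[\langle E_i^{(1)},u_i\rangle]$ on the cap event, bound the integrand by $\cos\rho$ on the cap and by $-1$ off it, and finish with Cauchy--Schwarz. Your explicit monotonicity step (slope $\cos\rho+1\ge0$ in $p$) spells out the passage from $\Pr(A)\ge1-\eta$ to the stated bound, which the paper leaves implicit.
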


\begin{proof}
Let $A_i\triangleq \{E_i^{(1)}\in \mathrm{Cap}(u_i,\rho)\}$. Then $\Pr(A_i)\ge 1-\eta$. On $A_i$, $\langle E_i^{(1)},u_i\rangle \ge \cos\rho$. On $A_i^c$, $\langle E_i^{(1)},u_i\rangle \ge -1$. Therefore
\begin{equation}
\langle \mu_i,u_i\rangle = \mathbb{E}\bigl[\langle E_i^{(1)},u_i\rangle\bigr] \ge (1-\eta)\cos\rho-\eta.
\end{equation}
Since $\|\mu_i\|_2\ge0$ and $\|\mu_i\|_2\ge \langle \mu_i,u_i\rangle$, it follows that $\|\mu_i\|_2\ge\bigl[(1-\eta)\cos\rho-\eta\bigr]_+$, proving \eqref{eq:mean_norm_lower_bound}.
\end{proof}

\vspace{10pt}

\section{Spherical-Code Bounds and Random-Code Separation}
\label{sec:supp_numerical}

The finite-dimensional spherical-code bounds, asymptotic lower-bound exponents, and uniform-center separation probability are evaluated under the spherical model. These calculations are separate from the DigiFace image-level evaluation. Figure~\ref{fig:bounds_vs_D} evaluates the finite-dimensional bounds in \eqref{eq:packing_sandwich_main} for $(\tau,\rho)=(0.80,8^\circ)$, with $\psi=\psi_\tau(\rho)$.

\begin{figure}[t]
    \centering
    \includegraphics[width=0.84\columnwidth]{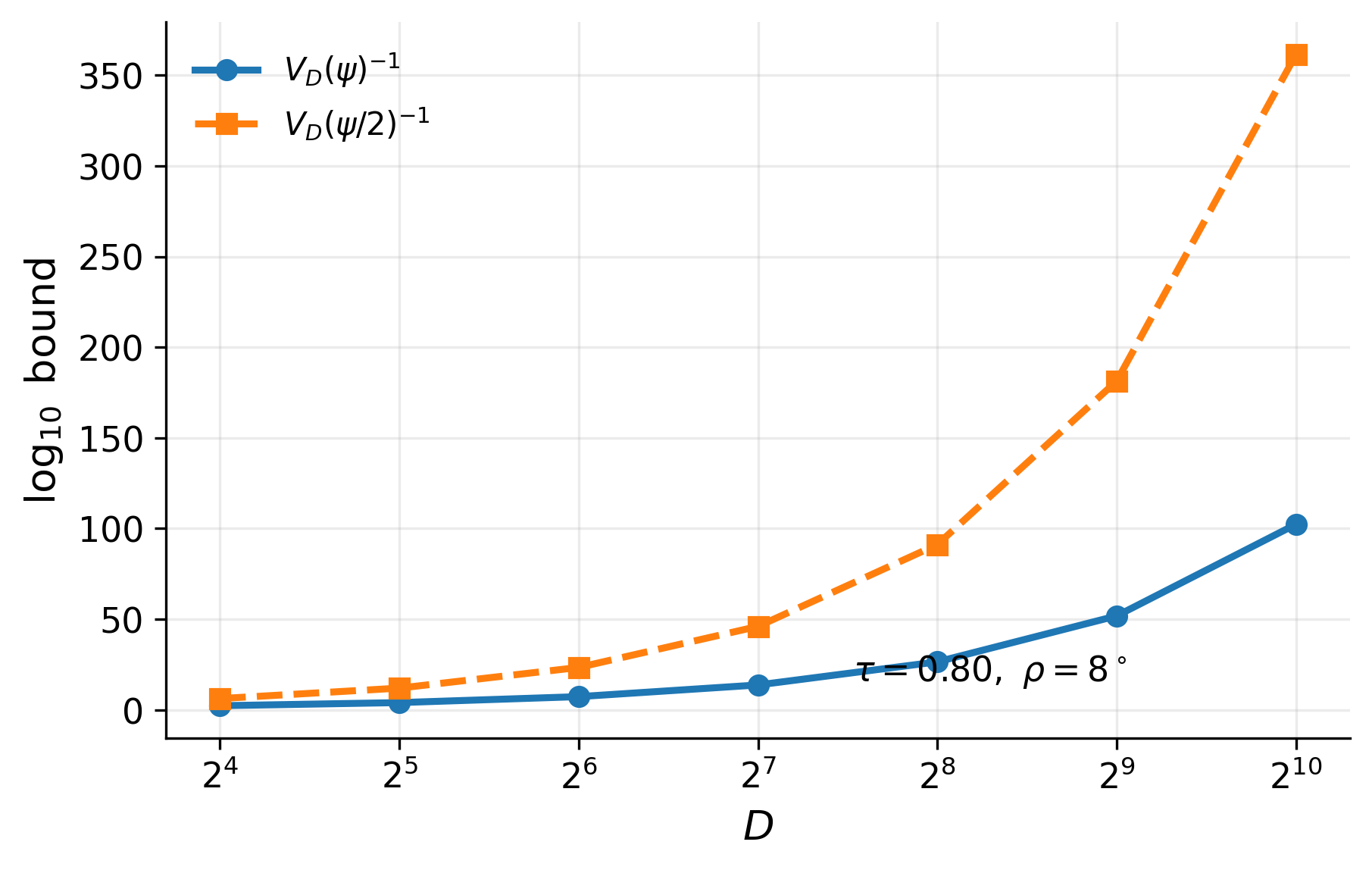}
    \vspace{-5pt}
    \caption{Spherical-code bounds for $(\tau,\rho)=(0.80,8^\circ)$.}
    \label{fig:bounds_vs_D}
\end{figure}

For $\psi_\tau(\rho)<\pi/2$, define the covering-based spherical-code lower-bound exponent
\begin{equation}
R_{\mathrm{sph}}^{\mathrm{LB}}(\tau,\rho) \triangleq -\log \left( \sin\bigl(\psi_\tau(\rho)\bigr) \right).
\end{equation}
If $2\rho<\arccos(\tau)$ and full $(\rho,\eta)$-angular expressivity holds, Theorem~\ref{thm:asymptotic_lower_bound_main} gives the same expression as a lower-bound exponent for the restricted fixed-code capacity. Under the hypotheses of Theorem~\ref{thm:asymptotic_random_capacity_main}, the random-code lower-bound exponent is
\begin{equation}
R_{\mathrm{rnd}}^{\mathrm{LB}}(\tau,\rho) \triangleq
\frac{1}{2} R_{\mathrm{sph}}^{\mathrm{LB}}(\tau,\rho).
\end{equation}

\begin{figure}[t]
    \centering
    \includegraphics[width=0.83\columnwidth]{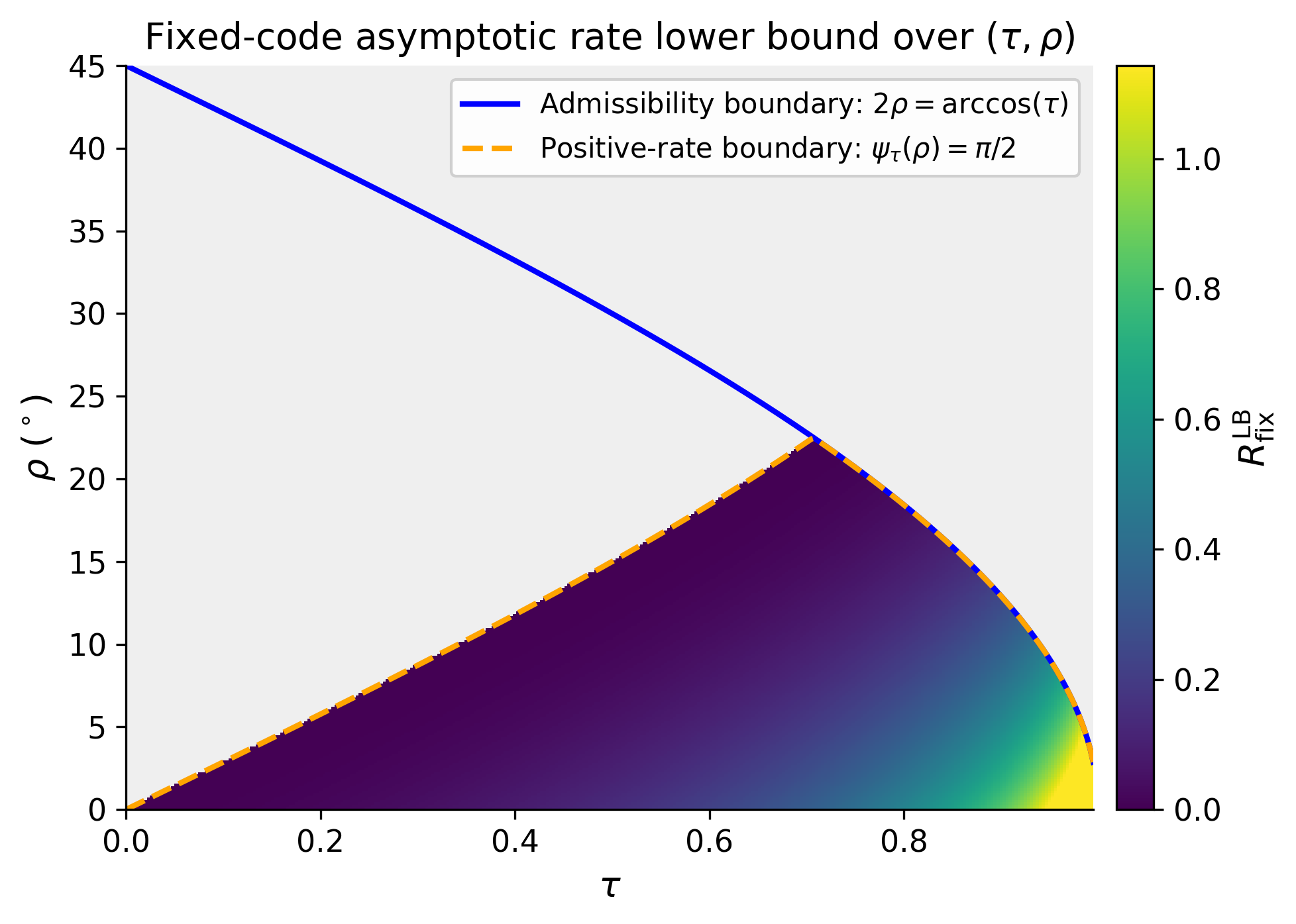}
    \caption{Spherical-code lower-bound exponent.}
    \label{fig:supp_fixed_rate_landscape}
\end{figure}

\begin{figure}[t]
    \centering
    \includegraphics[width=0.85\columnwidth]{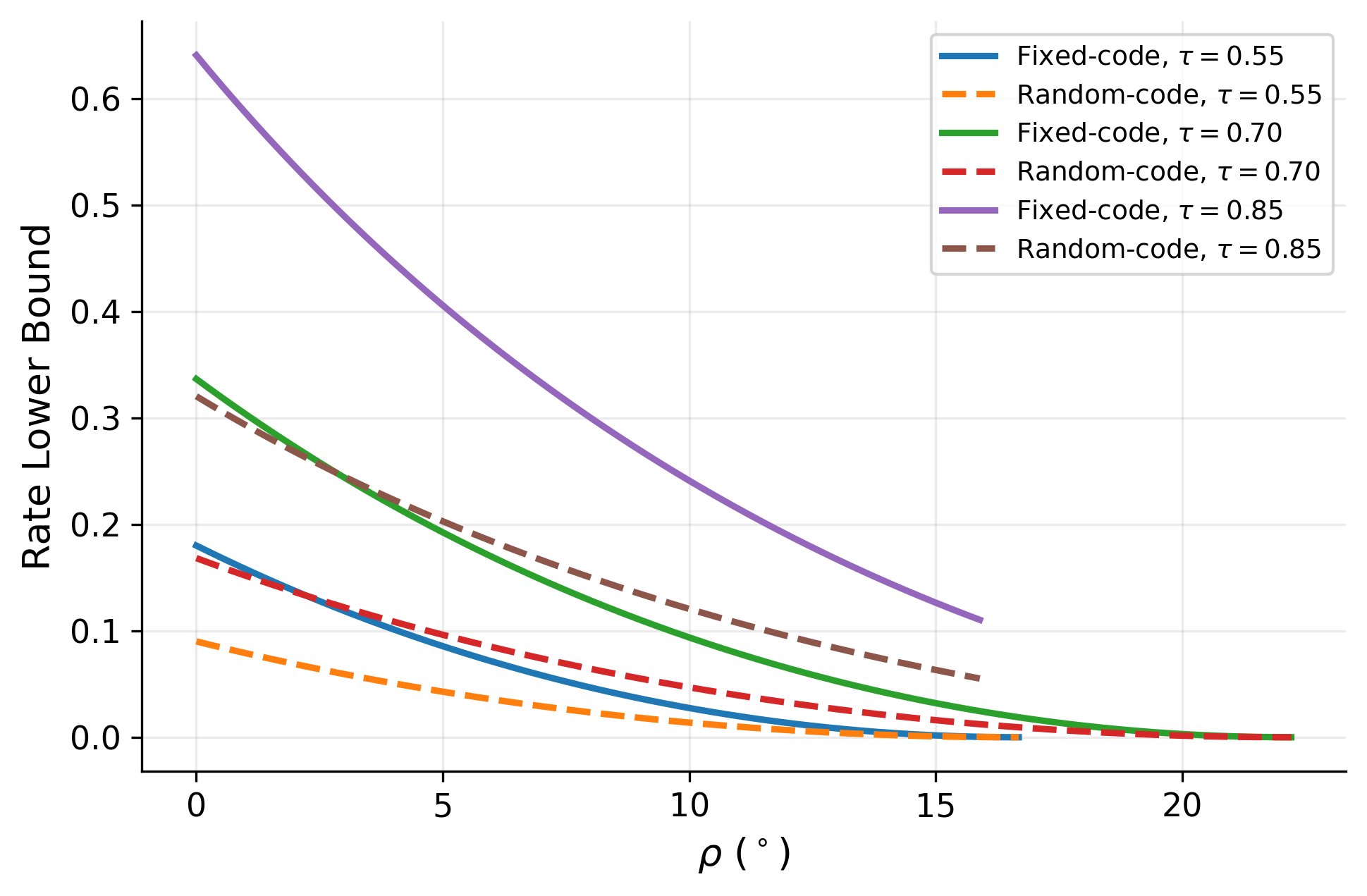}
    \caption{Spherical-code and random-code lower-bound exponents.}
    \label{fig:supp_fixed_vs_random_rate}
\end{figure}

For the finite-dimensional random-code calculation, let $U_1,\ldots,U_M$ be i.i.d.\ uniform on $\mathbb S^{D-1}$ and define
\begin{equation}
P_{\mathrm{sep}}^{(D)}(M;\tau,\rho) \triangleq \Pr \left[
\angle(U_i,U_j)\ge\psi_\tau(\rho), \, \forall\, i\ne j \right].
\end{equation}
The union bound gives
\begin{equation}
P_{\mathrm{sep}}^{(D)}(M;\tau,\rho) \ge 1-\binom{M}{2}V_D \left(\psi_\tau(\rho)\right).
\end{equation}
Figure~\ref{fig:supp_random_sep_success} compares Monte Carlo estimates of $P_{\mathrm{sep}}^{(D)}(M;\tau,\rho)$ with this lower bound.

\begin{figure*}[t]
    \centering
    \includegraphics[
        width=0.85\linewidth,
        trim={0 0 0 1cm},
        clip
    ]{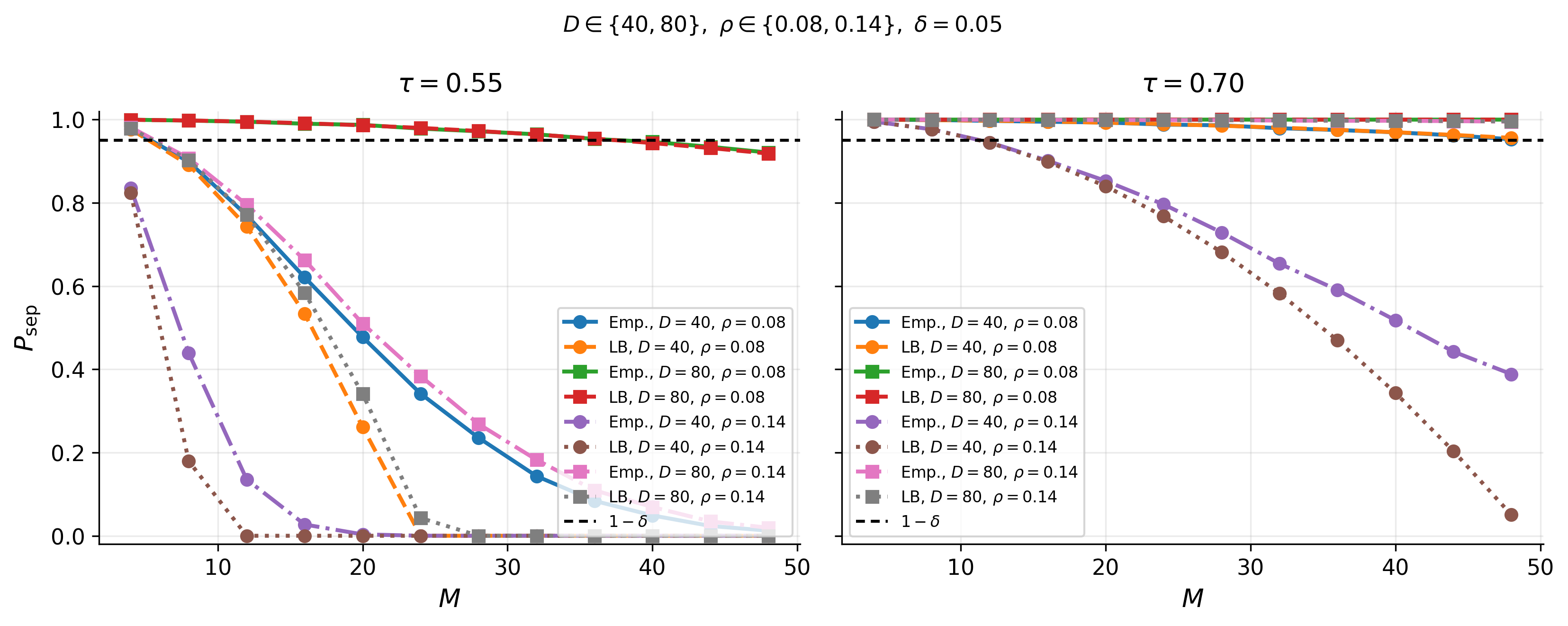}
    \caption{Uniform-center separation probability.}
    \label{fig:supp_random_sep_success}
\end{figure*}

\vspace{20pt}

\section{Image-Level Evaluation Protocol}
\label{sec:supp_real_pipeline_protocol}

In our evaluation, we use released DigiFace-1M repeated-view images under fixed recognizers and a specified calibration protocol. The multiple images available for each generator-defined identity permit evaluation of the empirical genuine acceptance rates and pairwise impostor non-match rates defined in Section~\ref{subsec:finite_sample_evaluation}. The DigiFace generator is not retrained or rerun, and no additional face images or generator-defined identities are produced. The cardinalities are therefore conditional on the fixed calibration and evaluation image sets, the recognizer, and the calibration protocol.

\subsection{Identity Split and Threshold Calibration}

Let $\mathcal I_{\rm cal}$ and $\mathcal I_{\rm eval}$ denote disjoint sets of calibration and evaluation identities. The calibration identities are used only to determine the verification threshold, whereas all sample-restricted evaluation quantities are computed on $\mathcal I_{\rm eval}$. The primary in-domain protocol uses held-out calibration identities from the same selected DigiFace identity pool as the evaluation identities.

In our implementation, we construct the identity split by applying a seeded permutation to a sorted identity list and use the same split for every recognizer. For a calibration fraction $f\in(0,1)$ and $N$ eligible identities, we assign exactly $\lceil fN\rceil$ identities to calibration, subject to at least two identities remaining in each set. We hold the calibration and evaluation image sets fixed across recognizers and verify that the corresponding source-image files are identical.

An LFW calibration set may instead be used to evaluate threshold transfer. A threshold satisfying the target empirical calibration FMR constraint on the LFW calibration set need not satisfy the same empirical FMR constraint on the DigiFace evaluation sample. Such results are threshold-transfer quantities rather than in-domain operating-point quantities.

For recognizer $m$, let $s_m(x,x')$ denote the cosine score after the specified preprocessing. Given the $L_m$ calibration impostor scores $s_{m,1},\ldots,s_{m,L_m}$, define
\begin{equation}
\widehat{\mathrm{FMR}}_m(t) \triangleq
\frac{1}{L_m} \sum_{\ell=1}^{L_m} \mathbf 1\{s_{m,\ell}>t\}.
\label{eq:supp_empirical_calibration_fmr}
\end{equation}
For target empirical calibration FMR $\alpha$, the verification threshold is
\begin{equation}
\tau_{m,\alpha} \triangleq \inf \left\{ t\in[0,1) \,\middle|\,
\widehat{\mathrm{FMR}}_m(t)\le\alpha \right\},
\label{eq:supp_empirical_threshold_rule}
\end{equation}
provided the displayed set is nonempty. The verification rule is
\begin{equation}
\text{match}
\quad\Longleftrightarrow\quad s_m(x,x')>\tau_{m,\alpha}.
\end{equation}
Scores equal to $\tau_{m,\alpha}$ are therefore classified as non-matches under the threshold convention adopted in the main manuscript.

Each recognizer is calibrated separately at the same target $\alpha$. A common numerical cosine threshold is not imposed across recognizers. The calibration target $\alpha$ and the admissibility tolerance $\varepsilon_{\rm out}$ have distinct roles. The former determines the verification threshold from the calibration scores, whereas the latter upper-bounds the empirical impostor match rate for each distinct
evaluation identity pair at that threshold. No equality between $\alpha$ and $\varepsilon_{\rm out}$ is assumed.
If no $t\in[0,1)$ satisfies $\widehat{\mathrm{FMR}}_m(t)\le\alpha$, calibration at the requested target is declared infeasible. We use the strict comparison $s_m>\tau_{m,\alpha}$, consistent with the stated verification rule.

Calibration impostor scores are not mutually independent because different image pairs can share an image. We use $\widehat{\mathrm{FMR}}_m(\tau_{m,\alpha})$ as a descriptive empirical calibration rate and do not assign it a binomial confidence interpretation. Statistical inference for this quantity must account for dependence among the calibration scores.

\subsection{Evaluation Quantities}

For every evaluation identity, we embed and normalize all retained repeated views. Equations~\eqref{eq:empirical_genuine_rate}--\eqref{eq:empirical_impostor_rate} give the complete-pair genuine acceptance and impostor non-match rates. If the repeated views are independent draws from the nuisance distribution, the genuine rate is a U-statistic of order two and the impostor rate is a two-sample U-statistic of order $(1,1)$. For a predetermined render grid, the same quantities are finite-grid descriptive rates. We state the applicable interpretation for each experiment.

After removing identities that fail the empirical genuine constraint, we place an undirected edge between identities $i$ and $j$ when their empirical impostor non-match rate is at least $1-\varepsilon_{\rm out}$. The sample-restricted cardinality is the maximum-clique cardinality of the resulting compatibility graph.  In our implementation, we decompose the graph into connected components and apply exact Bron--Kerbosch search with pivoting~\cite{bron1973algorithm} subject to the configured exact-search limit. A complete component is resolved exactly regardless of its size. If the global maximum-clique cardinality is not established exactly, we return the largest clique found as a certified lower bound.

Fix an empirical coverage-tail parameter $\eta\in[0,1)$. For the same embeddings, let $a_{i,(1)}\le\cdots\le a_{i,(K_i)}$ denote the ordered angles
\[
a_{i,k}=\angle(E_i^{(k)},\widehat u_i)
\]
whenever the normalized empirical mean $\widehat u_i$ is defined. The empirical radius is the smallest observed angular radius containing at least a fraction $1-\eta$ of the evaluated views,
\begin{equation}
\widehat\rho_i = a_{i,(\lceil(1-\eta)K_i\rceil)}.
\label{eq:supp_empirical_radius_order_stat}
\end{equation}
These quantities are used only in the finite-sample geometric check. We use $\zeta=10^{-12}$ in \eqref{eq:empirical_mean_direction_main}; hence $\widehat u_i$ and $\widehat\rho_i$ are defined only when $\|\widehat m_i\|_2>\zeta$.

The finite-sample geometric check requires
\begin{align}
2\widehat\rho_\ell &<\arccos(\tau),
\qquad \ell\in\{i,j\},
\\
\angle(\widehat u_i,\widehat u_j) &\ge \arccos(\tau)+\widehat\rho_i+\widehat\rho_j.
\end{align}
Satisfaction of these inequalities does not by itself establish the population centeredness conditions or population-level admissibility without an additional statistical coverage argument. It also does not by itself imply the empirical admissibility constraints defining $\widehat C_{\mathcal S}$.

\subsection{Recognizer Comparisons}

To study recognizer dependence, we use the same calibration and evaluation images for every recognizer and calibrate each recognizer separately at a common target empirical calibration FMR. For this comparison, all recognizers receive the same retained aligned face crops, followed by model-specific input normalization. The sample-restricted cardinality is computed under each fixed recognizer and its specified calibration protocol. The TFace/ADMIS comparison measures differences between the selected recognition pipelines. Because the corresponding training runs are not controlled to differ only in training-data source, the comparison does not isolate a causal effect of training-data source.

\subsection{DigiFace-1M Evaluation}

We use the released DigiFace-1M 72-view images~\cite{bae2023digiface1m} without rerunning the DigiFace generator. From the P1 block corresponding to identities 0--1999, a deterministic selection with seed 20260901 chooses 250 identities and 24 images per identity, yielding 6,000 source images. Before recognizer scoring, we apply YuNet~\cite{wu2023yunet} to the selected images and evaluate detector score thresholds $0.90,0.85,0.80,0.75,0.70$. The threshold $0.80$ is the largest value in this grid for which every selected identity retains at least 12 of its 24 views. The retained-image counts at thresholds $0.90$, $0.85$, and $0.80$ are 4,481, 5,736, and 5,868, respectively. We therefore use the detector score threshold $0.80$. This criterion controls preprocessing coverage and does not certify landmark accuracy. We align every retained detection to $112\times112$ from the five YuNet landmarks~\cite{wu2023yunet} using OpenCV \texttt{FaceRecognizerSF::alignCrop}\footnote{OpenCV \texttt{FaceRecognizerSF} implementation, \url{https://github.com/opencv/opencv/blob/4.x/modules/objdetect/src/face_recognize.cpp}.}, consistent with the five-point $112\times112$ preprocessing described in~\cite{deng2019arcface}. The common preprocessing stage retains 5,868 images from all 250 identities, with 13--24 retained views per identity. The remaining 132 selected images do not produce retained aligned crops. We use the same 5,868 aligned crops for all three recognizers.

The evaluated recognizers are OpenCV SFace~\cite{zhong2021sface}, which produces 128-dimensional embeddings, a public TFace IR-50 checkpoint\footnote{TFace recognition models, \url{https://github.com/Tencent/TFace/tree/master/recognition}.}, which produces 512-dimensional embeddings, and the public ADMIS Syn-30k IR-50/ArcFace checkpoint~\cite{deng2019arcface,deandres2024frcsyn}\footnote{ADMIS pretrained models, \url{https://github.com/zzzweakman/CVPR24_FRCSyn_ADMIS}.}, which also produces 512-dimensional embeddings. The TFace release provides an IR-50 model and separately lists an IR-50/ArcFace/MS1Mv2 benchmark configuration, but does not establish that the released IR-50 checkpoint used in this evaluation is that exact benchmark model. We therefore describe it only as the TFace IR-50 checkpoint. The ADMIS release identifies the evaluated model as an IR-50 model trained on Syn-30k with ArcFace. The TFace and ADMIS results are therefore treated as a comparison of two fixed IR-50 recognition pipelines, not as a controlled experiment in which training-data source is the only changed variable.

For the primary in-domain operating point, we use seed 20260901 to assign 50 DigiFace identities to calibration and 200 identities to evaluation, with the same identity split for all recognizers. After the common preprocessing stage, the calibration and evaluation sets contain 1,179 and 4,689 images, respectively. The two sets contain disjoint identities and images. Complete calibration impostor scoring gives an attained
empirical FMR of approximately $9.9984\times10^{-4}$ at target $10^{-3}$ for every recognizer. The calibrated thresholds are $0.522704$ for SFace, $0.496364$ for ADMIS Syn-30k IR-50, and $0.398517$ for TFace IR-50.

With  $\varepsilon_{\rm in}=\varepsilon_{\rm out}=0.05$, the pairs $(N_{\rm G},\widehat C_{\mathcal S})$ are $(2,2)$ for SFace, $(9,9)$ for ADMIS Syn-30k IR-50, and $(70,58)$ for TFace IR-50. All three maximum-clique cardinalities are exact. We use $\eta=0.05$ only for the finite-sample geometric check based on empirical mean directions and radii, so it does not enter the operational compatibility graph. For TFace, a maximum compatible subset therefore contains $58/70\approx82.9\%$ of the identities satisfying the empirical genuine constraint on the primary split. For SFace and ADMIS, these identities form complete compatibility graphs, so $\widehat C_{\mathcal S}=N_{\rm G}$ on this split.

\paragraph{Threshold Transfer from LFW}

We separately evaluate threshold transfer using an aligned LFW~\cite{huang2008labeled} calibration subset\footnote{Aligned LFW data release used in the experiment, \url{https://github.com/haibo-qiu/SynFace}.} containing 250 identities with two images each and 124,500 cross-identity impostor scores. At target empirical LFW calibration FMR $10^{-3}$, the calibrated thresholds are $0.351269$ for SFace, $0.350165$ for ADMIS Syn-30k IR-50, and $0.236878$ for TFace IR-50. We apply these thresholds without recalibration to the selected 250-identity DigiFace sample. The resulting aggregate complete-pair empirical DigiFace impostor FMRs are approximately $7.33\%$, $2.55\%$, and $9.06\%$ for SFace, ADMIS Syn-30k IR-50, and TFace IR-50, respectively.

At the transferred thresholds, $(N_{\rm G},\widehat C_{\mathcal S})$ is $(87,12)$ for SFace and $(90,23)$ for ADMIS, with both maximum-clique cardinalities exact. For TFace, $N_{\rm G}=234$ and $\widehat C_{\mathcal S}\ge18$, where 18 is a certified lower bound. The LFW-calibrated thresholds therefore yield empirical DigiFace FMRs above the target value of $10^{-3}$ on the selected 250-identity sample. These cardinalities are threshold-transfer quantities and are not interpreted as DigiFace results at an in-domain empirical FMR of $10^{-3}$. This analysis quantifies sensitivity to threshold calibration and is not used as the primary operating point.

\subsection{Effective-Dimension Sensitivity}

For the effective-dimension sensitivity analysis, we generate one seeded random orthonormal basis and define nested subspaces using its first $d$ basis vectors. We project the calibration and evaluation embeddings onto the same subspace at each dimension, renormalize the projected embeddings, recalibrate the verification threshold separately at the same target empirical calibration FMR, and recompute the sample-restricted
quantities. This procedure evaluates dimensional-compression sensitivity of one trained representation. It does not test the $D\to\infty$ result, which concerns a specified family of pipelines with increasing native embedding dimension.

For TFace IR-50, seed 20260901 defines the orthonormal basis and the nested dimensions $d\in\{16,32,64,128,256,512\}$. We use the same 50 calibration identities and 200 evaluation identities at every $d$, with separate threshold recalibration at target empirical calibration FMR $10^{-3}$ after projection and renormalization. The corresponding $N_{{\rm G},d}$ values are 0, 0, 1, 12, 34, and 70, and the exact maximum-clique cardinalities are 0, 0, 1, 12, 30, and 58. All maximum-clique cardinalities in this analysis are exact. As a separate SFace sensitivity analysis under the same in-domain split and calibration target, the exact maximum-clique cardinalities at $d=16,32,64,128$ are 0, 0, 0, and 2, respectively. Table~\ref{tab:effective_dimension_sensitivity} gives the complete TFace results.

\subsection{Reproducibility Records}

For every evaluation, we record the calibration and evaluation identity sets, retained images, recognizer configuration, calibrated threshold, target and attained empirical calibration FMR, $(\varepsilon_{\rm in},\varepsilon_{\rm out},\eta)$, per-identity genuine acceptance rates, pairwise impostor non-match rates, and whether the maximum-clique cardinality is exact or a certified lower bound. We also record the empirical mean directions and radii used in the separate geometric check.

We verify that the retained image sets and identity split used for each cross-recognizer comparison are identical across recognizers and that the calibration and evaluation sets contain disjoint identities and source images. The retained embeddings and recorded evaluation parameters are sufficient to reconstruct the compatibility graph and recompute the sample-restricted cardinality without rerunning the recognizer.

For the finite-sample geometric check, the pass fraction is computed over all unordered pairs of evaluation identities. A pair is counted as passing only when both empirical mean directions are defined, both identities satisfy
\begin{equation}
2\widehat\rho_\ell<\arccos(\tau), \qquad \ell\in\{i,j\},
\end{equation}
and the pair satisfies
\begin{equation}
\angle(\widehat u_i,\widehat u_j) \ge \arccos(\tau)+\widehat\rho_i+\widehat\rho_j.
\end{equation}
Pairs failing any of these conditions are counted as non-passing. We separately record whether every unordered pair in the returned subset satisfies the same geometric check. Neither quantity is interpreted as a population-level admissibility certificate.

\section{Geometric Converse under Full-Cap Support}
\label{sec:full_cap_converse}

Theorem~\ref{thm:sufficient_admissibility_main} gives a sufficient admissibility condition under the centered model but no converse. To derive necessary geometric conditions, we strengthen the model by requiring each identity-conditional embedding distribution to have support equal to a radius-$\rho$ cap. We restrict to $\rho\in[0,\pi/2)$.

\begin{definition}[$\rho$-full-cap-supported identity-conditional embedding distribution]
\label{def:full_cap_supported}
An identity-conditional embedding distribution $P_i$ is \textit{$\rho$-full-cap-supported at}
$u_i\in\mathbb S^{D-1}$ if
\begin{equation}
\operatorname{supp}(P_i) = \mathrm{Cap}(u_i,\rho),
\label{eq:full_cap_support_def}
\end{equation}
where support is relative to $\mathbb S^{D-1}$. It is called \textit{$\rho$-full-cap-supported} if such a $u_i$ exists.
\end{definition}

\begin{theorem}[Full-cap zero-error conditions]
\label{thm:full_cap_iff}
Fix $\rho\in[0,\pi/2)$ and $\tau\in[0,1)$, and suppose $\{P_i\}_{i=1}^M$ is a family such that $P_i$ is $\rho$-full-cap-supported at $u_i\in\mathbb S^{D-1}$ for every $i\in[M]$.
\begin{enumerate}
\item If the family is $(\tau,0,0)$-admissible, then
\begin{align}
2\rho &\le \arccos(\tau),
\label{eq:full_cap_intra_iff}\\
\angle(u_i,u_j) &\ge \arccos(\tau)+2\rho, \qquad i\ne j.
\label{eq:full_cap_inter_iff}
\end{align}
\item If
\begin{equation}
2\rho<\arccos(\tau)
\label{eq:full_cap_intra_suff}
\end{equation}
and \eqref{eq:full_cap_inter_iff} holds, then the family is $(\tau,0,0)$-admissible.
\end{enumerate}
At the boundary $2\rho=\arccos(\tau)$, every within-cap pair has cosine score at least $\tau$, but zero-error genuine acceptance additionally requires
\begin{equation}
(P_i\otimes P_i) \left( \left\{ (e,e')\in \mathbb S^{D-1}\times\mathbb S^{D-1} \,\middle|\,
\langle e,e'\rangle=\tau \right\} \right) =0
\end{equation}
for every $i\in[M]$. Thus support alone does not determine the zero-error genuine condition at the boundary.
\end{theorem}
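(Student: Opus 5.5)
We prove part 2 first, because it follows from earlier results. If $2\rho<\arccos(\tau)$ and \eqref{eq:full_cap_inter_iff} holds, then each $P_i$ is $\rho$-full-cap-supported at $u_i$. Hence $P_i(\mathrm{Cap}(u_i,\rho))=1$: the support is closed and has full measure, and it equals the cap. So $P_i$ is $(\rho,0)$-centered at $u_i$. Theorem~\ref{thm:sufficient_admissibility_main} with $\eta=0$ then gives $(\tau,0,0)$-admissibility, since $1-(1-0)^2=0$.

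For part 1 we argue by contraposition, using the defining property of support: every relatively open subset of $\mathbb S^{D-1}$ that meets $\mathrm{Cap}(u_i,\rho)$ has positive $P_i$-mass.

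\emph{Intra condition.} Suppose $2\rho>\arccos(\tau)$. Since $\rho<\pi/2$, the cap contains two antipodal-about-$u_i$ boundary points $e,e'$ lying on a great circle through $u_i$, and $\angle(e,e')=2\rho<\pi$. Then $\langle e,e'\rangle=\cos 2\rho<\tau$. By continuity of the inner product, there are small open neighborhoods $N_e,N_{e'}$ on which $\langle x,y\rangle<\tau$. Each neighborhood meets the support, so it has positive mass. Independence gives $\Pr[\langle E_i^{(1)},E_i^{(2)}\rangle\le\tau]\ge P_i(N_e)P_i(N_{e'})>0$, which contradicts $\varepsilon_{\rm in}=0$.

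\emph{Inter condition.} Suppose $\angle(u_i,u_j)<\arccos(\tau)+2\rho$. We want points $e\in\mathrm{Cap}(u_i,\rho)$ and $e'\in\mathrm{Cap}(u_j,\rho)$ with $\angle(e,e')<\arccos(\tau)$, i.e.\ a strict match. There are two cases.
\begin{itemize}
\item If $u_i=u_j$, take $e=e'=u_i$.
\item Otherwise, work on the great circle through $u_i$ and $u_j$. Move from $u_i$ toward $u_j$ by angle $\rho$, and from $u_j$ toward $u_i$ by angle $\rho$. When $\angle(u_i,u_j)\ge 2\rho$, the resulting points satisfy $\angle(e,e')=\angle(u_i,u_j)-2\rho<\arccos(\tau)$. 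When $\angle(u_i,u_j)<2\rho$, the caps overlap and we can take $e=e'$ at the midpoint.
\end{itemize}
In either case continuity again gives open neighborhoods with positive mass under $P_i$ and $P_j$ respectively, on which $\langle x,y\rangle>\tau$. Independence then gives a positive impostor match probability, which contradicts $\varepsilon_{\rm out}=0$.

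\emph{Boundary remark.} If $2\rho=\arccos(\tau)$, the triangle inequality shows every within-cap pair has angle at most $\arccos(\tau)$, i.e.\ score at least $\tau$. Since the rule declares a match only when the score is strictly above $\tau$, zero genuine error is equivalent to $(P_i\otimes P_i)(\{\langle e,e'\rangle=\tau\})=0$. This diagonal-level set can carry positive mass or zero mass while the support stays the same. For example, add atoms at two antipodal boundary points of the cap to a measure whose support is the whole cap, versus using a measure absolutely continuous with respect to $\sigma_{D-1}$ restricted to the cap. So support alone does not decide the boundary case.

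The main obstacle is the inter-condition converse. We must make sure the geodesic construction stays inside both caps, and that the strict inequality survives passage to open neighborhoods that intersect the supports (a neighborhood of a boundary point still meets the cap). Handling the overlapping-cap case separately resolves this.
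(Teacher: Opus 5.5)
Your proof is correct and follows essentially the same route as the paper. Necessity is proved by contradiction, using continuity of the inner product, the positive mass that full support gives to neighborhoods of cap points, and independence of the two samples. Sufficiency follows from the triangle-inequality cap argument, which you get by noting $P_i(\mathrm{Cap}(u_i,\rho))=1$ and invoking Theorem~\ref{thm:sufficient_admissibility_main} with $\eta=0$, whereas the paper reruns that argument directly. Your explicit geodesic construction of the nearby cross-cap points, and your atom-versus-absolutely-continuous example at $2\rho=\arccos(\tau)$, supply details that the paper only asserts.
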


\begin{proof}
For necessity of \eqref{eq:full_cap_intra_iff}, suppose $2\rho>\arccos(\tau)$. The cap $\mathrm{Cap}(u_i,\rho)$ contains two points whose angular separation is $2\rho$ and whose cosine score is therefore strictly smaller than $\tau$. By continuity of the cosine score, there exist relative neighborhoods of these points within the cap such that every pair formed by one point from each neighborhood has score strictly smaller than $\tau$. Full support assigns positive $P_i$-probability to each neighborhood. Since the two genuine samples are independent, the event that they lie in the respective neighborhoods has positive probability. The genuine rejection probability is therefore positive, contradicting $(\tau,0,0)$-admissibility. Hence $2\rho\le\arccos(\tau)$.

For necessity of \eqref{eq:full_cap_inter_iff}, suppose for some $i\ne j$ that
\begin{equation}
\angle(u_i,u_j) < \arccos(\tau)+2\rho.
\end{equation}
Then $\mathrm{Cap}(u_i,\rho)$ and $\mathrm{Cap}(u_j,\rho)$ contain points whose angular separation is strictly smaller than $\arccos(\tau)$. Their cosine score is therefore strictly larger than $\tau$. By continuity, there exist relative neighborhoods of these points within the respective caps such that every corresponding cross-cap pair has score strictly larger than $\tau$. Full support assigns positive probability to both neighborhoods. Independence of the cross-identity samples therefore gives a positive-probability impostor match event, contradicting $(\tau,0,0)$-admissibility. Hence \eqref{eq:full_cap_inter_iff} is necessary.

For sufficiency, \eqref{eq:full_cap_intra_suff} implies that every pair of points in the same radius-$\rho$ cap has angular separation at most $2\rho<\arccos(\tau)$ and therefore cosine score strictly larger than $\tau$. Thus the genuine constraint holds with probability one. Under \eqref{eq:full_cap_inter_iff}, every pair of points from distinct caps has angular separation at least
\[
\angle(u_i,u_j)-2\rho \ge \arccos(\tau)
\]
and therefore cosine score at most $\tau$. Thus the impostor non-match constraint also holds with probability one. The family is consequently $(\tau,0,0)$-admissible.
\end{proof}

\begin{definition}[$\rho$-full-cap restricted identity capacity]
\label{def:full_cap_restricted_capacity}
Define
\begin{multline}
C_{D,\mathrm{cap}}^{(\rho)}(\tau;g,\phi) \triangleq \sup\Bigl( \{0\}\cup
\Bigl\{ M\in\mathbb N  \,\Big|\, \\
\exists\,c_1,\ldots,c_M\in\mathcal Z
\text{ pairwise distinct},\\
\{P_{c_i}\}_{i=1}^M
\text{ is }(\tau,0,0)\text{-admissible}, \\
\text{each }P_{c_i}\text{ is }\rho\text{-full-cap-supported} \Bigr\} \Bigr).
\label{eq:full_cap_capacity_def}
\end{multline}
\end{definition}

\begin{corollary}[Restricted capacity under full-cap support]
\label{cor:full_cap_packing_upper}
Let $\rho\in[0,\pi/2)$. If $2\rho>\arccos(\tau)$, then
\begin{equation}
C_{D,\mathrm{cap}}^{(\rho)}(\tau;g,\phi) = 0.
\label{eq:full_cap_zero_capacity}
\end{equation}
If instead $2\rho\le\arccos(\tau)$, then $\psi_\tau(\rho)\le\pi$ and
\begin{equation}
C_{D,\mathrm{cap}}^{(\rho)}(\tau;g,\phi) \le A_D\bigl(\psi_\tau(\rho)\bigr).
\label{eq:full_cap_packing_upper}
\end{equation}
\end{corollary}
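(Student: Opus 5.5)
The plan is to reduce both claims to part 1 of Theorem~\ref{thm:full_cap_iff}, the necessity direction of the full-cap zero-error conditions. Let $M\ge1$ be any size admitted in the supremum of Definition~\ref{def:full_cap_restricted_capacity}. Then there are pairwise-distinct codes $c_1,\ldots,c_M$ such that $\{P_{c_i}\}_{i=1}^M$ is $(\tau,0,0)$-admissible and each $P_{c_i}$ is $\rho$-full-cap-supported at some center $u_i\in\mathbb S^{D-1}$. Part 1 of Theorem~\ref{thm:full_cap_iff} then gives $2\rho\le\arccos(\tau)$ and $\angle(u_i,u_j)\ge\psi_\tau(\rho)$ for all $i\ne j$.

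For the first claim, suppose $2\rho>\arccos(\tau)$. Part 1 applies even when $M=1$, because its necessity argument for \eqref{eq:full_cap_intra_iff} uses only the genuine constraint on a single identity. So no $M\ge1$ can belong to the set in \eqref{eq:full_cap_capacity_def}, and the supremum over $\{0\}\cup\emptyset$ equals $0$, which is \eqref{eq:full_cap_zero_capacity}.

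For the second claim, assume $2\rho\le\arccos(\tau)$. Then $\psi_\tau(\rho)=\arccos(\tau)+2\rho\le2\arccos(\tau)\le\pi$, since $\tau\ge0$ gives $\arccos(\tau)\le\pi/2$. Also $\psi_\tau(\rho)\ge\arccos(\tau)>0$ because $\tau<1$. Hence $\psi_\tau(\rho)\in(0,\pi]$, so $A_D(\psi_\tau(\rho))$ is defined via \eqref{eq:restricted_spherical_code_main}.

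Now take any admissible $M\ge2$. The centers satisfy $\angle(u_i,u_j)\ge\psi_\tau(\rho)>0$ for $i\ne j$, so they are pairwise distinct and form a spherical code of size $M$ with minimum angle at least $\psi_\tau(\rho)$. Therefore $M\le A_D(\psi_\tau(\rho))$. The cases $M\le1$ are immediate, since $A_D(\psi)\ge1$ for every $\psi$: any single point is a code. Taking the supremum over admissible $M$ gives \eqref{eq:full_cap_packing_upper}. The value $+\infty$ is excluded because $A_D$ is finite by Proposition~\ref{prop:cap_volume_converse_main}.

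The only delicate step is applying part 1 of Theorem~\ref{thm:full_cap_iff} to the actual centers. This requires that the center $u_i$ of a full-cap-supported distribution is well defined. If centers were not unique, any choice would still work, because part 1 applies to whichever centers witness \eqref{eq:full_cap_support_def}. In fact, for $\rho<\pi/2$ a cap determines its center, so nothing further is needed.

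The other point to check is that the definition counts only pairwise-distinct codes. This does not affect the argument, since distinctness of the codes is not used in the upper bound; only distinctness of the centers is, and that follows from the angle condition.
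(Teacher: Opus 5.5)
Your proof is correct and follows the same route as the paper. Both claims reduce to the necessity part of Theorem~\ref{thm:full_cap_iff}. When $2\rho>\arccos(\tau)$, the single-identity genuine constraint already excludes every $M\ge1$. Otherwise, the centers of any admissible family form a spherical code with minimum angle $\psi_\tau(\rho)\le 2\arccos(\tau)\le\pi$. Your extra checks are valid details that the paper leaves implicit: that $\psi_\tau(\rho)>0$, the cases $M\le1$, the finiteness of $A_D$, and independence from the choice of centers.
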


\begin{proof}
If $2\rho>\arccos(\tau)$, the necessary condition \eqref{eq:full_cap_intra_iff} rules out even a one-identity
$(\tau,0,0)$-admissible family. Hence \eqref{eq:full_cap_zero_capacity} follows. Suppose instead that $2\rho\le\arccos(\tau)$. Since $\tau\in[0,1)$,
\[
\psi_\tau(\rho) = \arccos(\tau)+2\rho \le 2\arccos(\tau) \le \pi.
\]
By Theorem~\ref{thm:full_cap_iff}, the centers of every admissible family satisfy
\[
\angle(u_i,u_j) \ge \psi_\tau(\rho), \qquad i\ne j.
\]
They therefore form a spherical code with minimum angle at least $\psi_\tau(\rho)$. Its cardinality cannot exceed $A_D(\psi_\tau(\rho))$, which proves \eqref{eq:full_cap_packing_upper}.
\end{proof}

\begin{definition}[Full $\rho$-cap angular expressivity]
\label{def:full_cap_angular_expressivity}
The pair $(g,\phi)$ is \textit{fully $\rho$-cap angularly expressive} if, for every finite set of pairwise-distinct directions $u_1,\ldots,u_M\in\mathbb S^{D-1}$, there exist pairwise-distinct latent codes $c_1,\ldots,c_M\in\mathcal Z$ such that
\begin{equation}
\operatorname{supp}(P_{c_i}) = \mathrm{Cap}(u_i,\rho), \qquad i\in[M].
\end{equation}
\end{definition}

\vspace{5pt}

\begin{corollary}[Exact full-cap restricted capacity]
\label{cor:full_cap_exact_reduction}
If $(g,\phi)$ is fully $\rho$-cap angularly expressive and $2\rho<\arccos(\tau)$, then
\begin{equation}
C_{D,\mathrm{cap}}^{(\rho)}(\tau;g,\phi) = A_D\bigl(\psi_\tau(\rho)\bigr).
\label{eq:full_cap_exact_reduction}
\end{equation}
\end{corollary}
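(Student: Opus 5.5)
The argument proves two inequalities. The upper bound is already available: since $2\rho<\arccos(\tau)$ implies $2\rho\le\arccos(\tau)$, Corollary~\ref{cor:full_cap_packing_upper} gives
\[
C_{D,\mathrm{cap}}^{(\rho)}(\tau;g,\phi)\le A_D\bigl(\psi_\tau(\rho)\bigr).
\]
That corollary also guarantees $\psi_\tau(\rho)\le\pi$, so $A_D(\psi_\tau(\rho))$ is well defined. It is finite, for example by Proposition~\ref{prop:cap_volume_converse_main}. It is at least $1$, because a single point is a valid code.

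For the lower bound, set $M=A_D(\psi_\tau(\rho))$ and take a spherical code $u_1,\ldots,u_M$ attaining it, so that $\angle(u_i,u_j)\ge\psi_\tau(\rho)$ for all $i\ne j$. The maximum in \eqref{eq:restricted_spherical_code_main} is attained because the set of admissible $M$ is a bounded subset of $\mathbb N_0$. Since $\psi_\tau(\rho)\ge\arccos(\tau)>0$ for $\tau<1$, the directions are pairwise distinct. Full $\rho$-cap angular expressivity (Definition~\ref{def:full_cap_angular_expressivity}) then supplies pairwise-distinct codes $c_1,\ldots,c_M$ with $\operatorname{supp}(P_{c_i})=\mathrm{Cap}(u_i,\rho)$. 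Each $P_{c_i}$ is therefore $\rho$-full-cap-supported at $u_i$.

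Part~2 of Theorem~\ref{thm:full_cap_iff} now applies. Its hypotheses are $2\rho<\arccos(\tau)$ and the center separation \eqref{eq:full_cap_inter_iff}, and it yields that $\{P_{c_i}\}$ is $(\tau,0,0)$-admissible. Hence $M$ belongs to the set in \eqref{eq:full_cap_capacity_def}, which gives $C_{D,\mathrm{cap}}^{(\rho)}\ge M$. Combining this with the upper bound proves the equality.

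There is no real obstacle. The only points needing care are that the maximum defining $A_D$ is attained, that the code directions are distinct so that expressivity applies, and that strict inequality $2\rho<\arccos(\tau)$ is exactly what is needed to invoke sufficiency, since the boundary case is excluded.
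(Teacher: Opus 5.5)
Your proposal is correct and takes the same route as the paper. The upper bound comes from Corollary~\ref{cor:full_cap_packing_upper}, and the lower bound realizes a maximum spherical code at minimum angle $\psi_\tau(\rho)$ through full $\rho$-cap angular expressivity and then applies the sufficiency part of Theorem~\ref{thm:full_cap_iff}; your added checks on finiteness and attainment of $A_D(\psi_\tau(\rho))$ and on distinctness of the code directions are details the paper leaves implicit.
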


\begin{proof}
The upper bound follows from Corollary~\ref{cor:full_cap_packing_upper}. Let $u_1,\ldots,u_M\in\mathbb S^{D-1}$ be a spherical code with $M=A_D\bigl(\psi_\tau(\rho)\bigr)$ and minimum pairwise angle at least $\psi_\tau(\rho)$. Full $\rho$-cap angular expressivity provides pairwise-distinct latent codes $c_1,\ldots,c_M$ such that
\[
\operatorname{supp}(P_{c_i}) = \mathrm{Cap}(u_i,\rho), \qquad i\in[M].
\]
Since $2\rho<\arccos(\tau)$, the sufficient part of Theorem~\ref{thm:full_cap_iff} implies that the induced family is $(\tau,0,0)$-admissible. Therefore
\[
C_{D,\mathrm{cap}}^{(\rho)}(\tau;g,\phi) \ge A_D\bigl(\psi_\tau(\rho)\bigr).
\]
Combining this inequality with the upper bound proves \eqref{eq:full_cap_exact_reduction}.
\end{proof}

\renewcommand{\refname}{Appendix References}
\putbib
\end{bibunit}

\end{document}